\newtheorem{theorem}{Theorem}
\newtheorem{corollary}{Corollary}
\newtheorem{definition}{Definition}
\begin{document}
	
	\title{Investigating Pure State Uniqueness in Tomography via Optimization}

    \author{Jiahui Wu}
	
	\affiliation{Department of Physics, The Hong Kong University of Science and Technology, Clear Water Bay, Kowloon, Hong Kong, China}

	\author{Zheng An}
	
	\affiliation{Department of Physics, The Hong Kong University of Science and Technology, Clear Water Bay, Kowloon, Hong Kong, China}

	\author{Chao Zhang}
	
	\affiliation{Department of Physics, The Hong Kong University of Science and Technology, Clear Water Bay, Kowloon, Hong Kong, China}

    \author{Xuanran Zhu}
	
	\affiliation{Department of Physics, The Hong Kong University of Science and Technology, Clear Water Bay, Kowloon, Hong Kong, China}

    \author{Shilin Huang}

	
	\affiliation{Department of Physics, The Hong Kong University of Science and Technology, Clear Water Bay, Kowloon, Hong Kong, China}

    \author{Bei Zeng}

    \email{bei.zeng@utdallas.edu}
	
	\affiliation{Department of Physics, The University of Texas at Dallas, Richardson, Texas 75080, USA}

	\date{\today}

    \begin{abstract}

    Quantum state tomography (QST) is crucial for understanding and characterizing quantum systems through measurement data. Traditional QST methods face scalability challenges, requiring $\mathcal{O}(d^2)$ measurements for a general $d$-dimensional state. This complexity can be substantially reduced to $\mathcal{O}(d)$ in pure state tomography, indicating that full measurements are unnecessary for pure states. In this paper, we investigate the conditions under which a given pure state can be uniquely determined by a subset of full measurements, focusing on the concepts of uniquely determined among pure states (UDP) and uniquely determined among all states (UDA). The UDP determination inherently involves non-convexity challenges, while the UDA determination, though convex, becomes computationally intensive for high-dimensional systems. To address these issues, we develop a unified framework based on the Augmented Lagrangian Method (ALM). Specifically, our theorem on the existence of low-rank solutions in QST allows us to reformulate the UDA problem with low-rank constraints, thereby reducing the number of variables involved. Our approach entails parameterizing quantum states and employing ALM to handle the constrained non-convex optimization tasks associated with UDP and low-rank UDA determinations. Numerical experiments conducted on qutrit systems and four-qubit symmetric states not only validate theoretical findings but also reveal the complete distribution of quantum states across three uniqueness categories: (A) UDA, (B) UDP but not UDA, and (C) neither UDP nor UDA. This work provides a practical approach for determining state uniqueness,
    advancing our understanding of quantum state reconstruction.
	\end{abstract}
	
	\maketitle
	
    \section{Introduction}
    
    Quantum state tomography (QST) is an process that involves reconstructing quantum states from measurement data~\cite{RevModPhys.81.299, PhysRevLett.109.120403, Qi2013, Paris2004}. This technique plays a fundamental role in understanding and verifying quantum systems, particularly within the rapidly evolving field of quantum technology~\cite{Haeffner2005, Kosaka2009, PhysRevLett.108.040502, Leibfried2005}. Traditional QST faces significant scalability challenge: in a $d$-dimensional Hilbert space, a general quantum state requires $d^2 - 1$ real parameters for complete description, necessitating $\mathcal{O}(d^2)$ independent measurements. This challenge has motivated extensive research into more efficient methods in recent years~\cite{PhysRevLett.105.150401, PhysRevLett.116.230501, Vanner2013, Carrasquilla2019, Ahmed2021, PhysRevApplied.21.014037}. A significant reduction in complexity emerges when the state to be reconstructed is known to be pure. A pure state $|\psi\rangle$ in $\mathbb{C}^d$ requires only $\mathcal{O}(d)$ measurements for determination~\cite{Heinosaari2013, chen2013uniqueness, ma2016pure}. This reduction makes pure state tomography especially valuable for high-dimensional quantum systems, where it can effectively decrease both measurement requirements and computational overhead during reconstruction.
    
    One promising approach to reconstruct pure quantum states is tomography via reduced density matrices (RDMs)~\cite{PhysRevLett.89.207901, PhysRevA.70.010302, PhysRevLett.89.277906, Chen2012, PhysRevA.86.022339, PhysRevA.88.012109}, which can significantly reduce experimental complexity while maintaining compatibility with practical quantum devices~\cite{PhysRevLett.118.020401}. Instead of performing measurements on the entire system, this method focuses on local measurements of subsystems, which constitute a subset of full measurements. The observation that a pure state may be reconstructed without requiring full measurements raises the following question: \newline 
    
    \textit{Under what conditions a given pure state can be uniquely specified by a subset of full measurements?}
    
    In this work, we focus on two concepts that closely related to this question: uniquely determined over all pure states (UDP) and uniquely determined over all states (UDA)~\cite{chen2013uniqueness}. Specifically, a pure state is considered UDP under a given measurement framework if no other pure state can produce identical measurement outcomes. A pure state is UDA if no quantum state, whether pure or mixed, can reproduce the same measurement results. While UDP ensures uniqueness within the pure state manifold, UDA extends this guarantee to all quantum states, accommodating scenarios involving imperfect state preparations or environmental interactions.
    
    The investigation of pure state tomography requires both rigorous theoretical analysis~\cite{Wang2020Pure, Huang2018, Sawicki2013} and practical numerical methods~\cite{Moroder2012,Jia2019, Xin2019,Zhang2023vjz}. Previous theoretical results suggest classifying many-body quantum states into three distinct and nontrivial  categories: (A) UDA, (B) UDP but not UDA, and (C) neither UDP nor UDA~\cite{PhysRevLett.118.020401}. Althought both UDP and UDA determination problems can be formulated as optimization problem, numerical validation of this classification has remained challenging. 
    This difficulty stems from two key factors: (i) the inherent non-convexity of UDP determination and (ii) the computational complexity of analyzing high-dimensional quantum systems.

    The UDP problem is inherently non-convex, due to the geometric structure of the pure state manifold. To address this constrained non-convex optimization challenge, we employ the Augmented Lagrangian Method (ALM). ALM, also known as the Method of Multipliers, was introduced by Hestenes~\cite{hestenes1969multiplier} and Powell~\cite{powell1969method} in 1969. This method augments the ordinary Lagrangian function with penalty terms to handle equality or inequality constraints more effectively~\cite{nocedal1999numerical,majumdar2020recent}.
    
    The UDA problem can be addressed with convex optimization techniques~\cite{baldwin2016strictly}, such as semidefinite programming (SDP), which guarantees global optimal solutions. However, when UDA problem involves high-dimensional variables, numerical computations become resource-intensive, limiting practical applicability. To alleviate this complexity, we prove the existence of low-rank solutions in QST under a subset of full measurements. This theorem enables us to reformulate the UDA problem with rank constraints, reducing the number of variables while introducing non-convexity. Notably, ALM can be also utilized to obtain low-rank feasible solutions to SDPs~\cite{burer2003nonlinear, wang2023solving}, suggesting it suitable for our low-rank UDA formulation. 
    
    Our practical approach to UDP problems, coupled with the low-rank formulation for UDA determinations, enables a comprehensive investigation of quantum state uniqueness in tomography. In this paper, we propose a unified framework based on ALM to address both UDP and low-rank UDA determinations. We conduct extensive numerical experiments on qutrit systems and four-qubit symmetric states, visualizing the distribution of these states across three distinct categories based on their uniqueness properties. Through these numerical experiments, we validate and extend previous theoretical results, providing deeper insights into quantum state reconstruction.
    
    The structure of this paper is organized as follows. In Section~\ref{sec:theory}, we provide the theoretical foundations for our work, introducing the concepts of UDP and UDA, and formulating the uniqueness problems as optimization problems. We then present our theorem on the existence of low-rank solutions in QST and discuss its implications. In Section~\ref{sec:method}, we describe our methodology, detailing the parameterization of quantum states and our implementation of ALM for determining uniqueness. Section~\ref{sec:numer} presents our numerical experiments and results for qutrit systems and four-qubit symmetric states, validating the theoretical results. Finally, in Section~\ref{sec:conclu}, we conclude with a discussion of our findings and outline potential directions for future research.
    
    \section{Theory}
    \label{sec:theory}
    
    \subsection*{Preliminaries}
    
    Quantum state tomography relies on two fundamental elements: the quantum states and the measurements applied to them. In a Hilbert space of dimension $d$, denoted as $\mathcal{H}^d$, a quantum state can be represented by a density operator $\rho$. This operator must satisfy two essential properties: it is positive semi-definite, and its trace equals one, ensuring normalization. Measurements are described by a set of observables $\mathbf{A} = \{A_1, A_2, \dots, A_{m}\}$.
    
    \begin{definition}[Measurement Vector]
    Given a measurement framework $\mathbf{A}$, measuring a quantum state $\rho$ produces a real-valued vector, known as the measurement vector:
    \begin{equation}
        \Vec{\mathcal{M}}_{\mathbf{A}}(\rho) = \begin{pmatrix}
            \Tr (A_1 \rho) \\
            \Tr (A_2 \rho) \\
            \vdots \\
            \Tr (A_m \rho) \\
        \end{pmatrix}.
    \end{equation}
    \end{definition}
    
    Building on these concepts, we define the notions of Unique Determination (UD) within measurement frameworks as follows:
    
    \begin{definition} [UDP]
    	A pure state $\ket{\psi}$ is Uniquely Determined among Pure states (UDP) under the measurement framework $\mathbf{A}$ if, for any pure state $\ket{\phi}$,
        \begin{align*}
        \Vec{\mathcal{M}}_{\mathbf{A}}(\ket{\phi}\bra{\phi}) = \Vec{\mathcal{M}}_{\mathbf{A}}(\ket{\psi}\bra{\psi})
        \implies |\langle\psi|\phi\rangle|^2 = 1.
        \end{align*}
    \end{definition}
    
    \begin{definition} [UDA]
    	A pure state $\ket{\psi}$ is Uniquely Determined among All states (UDA) under the measurement framework $\mathbf{A}$ if, for any pure or mixed state, represented by the density matrix $\rho$,
        \begin{align*}
        \Vec{\mathcal{M}}_{\mathbf{A}}(\rho) = \Vec{\mathcal{M}}_{\mathbf{A}}(\ket{\psi}\bra{\psi})
        \implies \Tr(\ket{\psi} \bra{\psi} \rho) = 1.
        \end{align*}
    \end{definition}
    
    \subsection*{Determining Uniqueness via Optimization}

    The determination of whether a pure quantum state $\ket{\psi}$ is UDP or UDA within a measurement framework $\mathbf{A}$ can be formulated as optimization problems. We present these formulations for both UDP and UDA cases.
    
    For the UDP problem, we aim to minimize the fidelity between $\ket{\psi}$ and a variable pure state $\ket{\phi}$, with the constraint that $\ket{\phi}$ is normalized and matches the measurement outcomes of $\ket{\psi}$:
    \begin{align*}
    	\underset{\ket{\phi} \in \mathbb{C}^d}{\text{minimize }} \;\; & |\langle\psi|\phi\rangle|^2 
        \\
    	\text{subject to } \;
        & \langle \phi | \phi \rangle = 1 
        \\ & 
        \Vec{\mathcal{M}}_{\mathbf{A}}(\ket{\phi}\bra{\phi}) = \Vec{\mathcal{M}}_{\mathbf{A}}(\ket{\psi}\bra{\psi}).
    \end{align*}
    The optimization landscape is inherently non-convex due to the quadratic nature of the fidelity term $||\langle\psi|\phi\rangle||^2$. If the optimal solution $\ket{\phi^*}$ satisfies $||\langle\psi|\phi\rangle||^2 = 1$, then $\ket{\psi}$ is confirmed to be UDP under $\mathbf{A}$. 
    
    Similarily, to determine whether a pure state $\ket{\psi}$ is UDA under $\mathbf{A}$, we cast this inquiry as a optimization problem over a density matrix $\rho$. The objective is to minimize the fidelity between the $\ket{\psi}$ and $\rho$ subject to semi-definite constraints:
    \begin{align*}
        \underset{\rho \in \mathbb{C}^{d \times d}}{\text{minimize }} \;\; & \Tr ( \ket{\psi} \bra{\psi} \rho ) \\
        \text{subject to } \;
        & \rho \succeq 0
        \\ &
        \Tr ( \rho ) = 1
        \\ &
        \Vec{\mathcal{M}}_{\mathbf{A}}(\rho) = \Vec{\mathcal{M}}_{\mathbf{A}}(\ket{\psi}\bra{\psi}).
    \end{align*}
    These constraints ensure that $\rho$ satisfies the properties of a valid quantum state while reproducing identical measurement outcomes to $\ket{\psi}$ under the framework $\mathbf{A}$. The convex nature of the UDA optimization problem allows for the application of SDP techniques, which are well-supported by numerous optimization software packages~\cite{majumdar2020recent,diamond2016cvxpy,mosek,scs,ocpb16scs}.
    Let $\rho^*$ denote the optimal solution to this problem. The state $\ket{\psi}$ is considered UDA under $\mathbf{A}$ if $\Tr (\ket{\psi} \bra{\psi} \rho^* ) = 1$.

    \subsection*{Rank and Decompositions of Density Matrices}
    
    The rank of a density matrix $\rho$, denoted as $\operatorname{rank}(\rho)$, is the number of non-zero eigenvalues of $\rho$. Given that $\rho$ is a positive semi-definite operator on a Hilbert space $\mathcal{H}^d$, its spectral properties can be thoroughly described by its eigenvalues and eigenvectors. A density matrix $\rho$ with a rank of $k$ can be succinctly expressed using spectral decomposition $\rho = \sum\nolimits_{i=1}^{k} \lambda_i |\psi_i\rangle \langle \psi_i|$, where $\ket{\psi_i}$ are the orthonormal eigenvectors of $\rho$ associated with the positive eigenvalues $\lambda_i$.

    The spectral decomposition is a special case of a more general representation known as the ensemble representation. In this representation, a density matrix $\rho$ whose rank is bounded by $r$ can be expressed by the ensemble of pure states:
    \begin{equation}
        \rho = \sum\nolimits_{i=1}^{r} p_i |\phi_i\rangle \langle \phi_i|,
    \end{equation}
    where ${\ket{\phi_i}}$ are not necessarily orthogonal, $p_i \geq 0$ are probabilities satisfying $\sum_{i=1}^r p_i = 1$. Using this representation, the optimization problems for UDP and UDA can be reformulated as:
    \begin{align*}
        \underset{ \{ \ket{\phi_i} \in \mathbb{C}^{d}, \: p_i \in \mathbb{R} \} }{\text{minimize}} & \Tr ( \ket{\psi} \bra{\psi} \rho ) 
        \\
    	\text{subject to } \;
    	& 
        \rho = \sum\nolimits_{i=1}^{r} p_i |\phi_i\rangle \langle \phi_i|
        \\ &
        \langle \phi_i | \phi_i \rangle = 1 
        \text{ and } p_i \ge 0
        \text{ for } i=1,\dots,r
        \\ &
        \sum\nolimits_{i=1}^{r} p_i = 1
        \\ &
        \Vec{\mathcal{M}}_{\mathbf{A}}(\rho) = \Vec{\mathcal{M}}_{\mathbf{A}}(\ket{\psi}\bra{\psi}).
    \end{align*}
    In this formulation, $r$ represents the maximum rank allowed for the variable density matrix $\rho$. For addressing UDP, we set $r=1$, corresponding to pure states. For addressing UDA, $r$ can be directly set to the dimension $d$ of the Hilbert space, to cover all pure states and mixed states. However, in the following subsection, we will demonstrate that it is not always necessary to set $r$ as high as $d$ when solving UDA problems, thereby potentially reducing computational complexity in certain cases.

    \subsection*{Low-Rank Solution in Quantum State Tomography}

    In QST, one seeks to reconstruct a quantum state $\rho$ from measurement data obtained via a set of observables $\mathbf{A} = \{ A_1, A_2, \dots, A_m \}$. Reconstructing $\rho$ exactly often requires considering all possible density matrices, which can have rank up to the dimension $d$ of the Hilbert space $\mathcal{H}^d$. To alleviate this complexity, we can exploit the fact that it is sufficient to consider density matrices of lower rank without loss of generality. Building upon the work of Chen et al.~\cite{chen2012rank}, we demonstrate that for any density matrix $\rho$, there exists a low-rank density matrix $\sigma$ that reproduces the same measurement outcomes under $\mathbf{A}$.

    \begin{theorem} [Existence of Low-Rank Solution in QST]
    \label{theo}
        Given a measurement framework $\mathbf{A} = \{A_1, A_2, ..., A_{m}\}$, for any density matrix $\rho$, there exists a density matrix $\sigma$ with rank less than $\sqrt{m+2}$ such that 
        \begin{equation*}
            \Vec{\mathcal{M}}_{\mathbf{A}}(\sigma) = \Vec{\mathcal{M}}_{\mathbf{A}}(\rho).
        \end{equation*}
    \end{theorem}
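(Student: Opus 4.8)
The plan is to produce the desired $\sigma$ as an \emph{extreme point} of the set of density matrices consistent with $\rho$ under $\mathbf{A}$, and then bound the rank of any such extreme point by a dimension count. First I would introduce the feasible set
\[
\mathcal{F} = \{ \sigma : \sigma \succeq 0, \ \Tr(\sigma) = 1, \ \Tr(A_i \sigma) = \Tr(A_i \rho) \text{ for } i = 1, \dots, m \}.
\]
Viewed inside the real vector space of $d \times d$ Hermitian matrices, $\mathcal{F}$ is convex, closed, and bounded (the trace constraint forces boundedness), hence compact, and it is nonempty since $\rho \in \mathcal{F}$. A nonempty compact convex set in finite dimensions has an extreme point, so I may select an extreme point $\sigma \in \mathcal{F}$; by construction it reproduces the measurement vector of $\rho$. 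It then remains to show $\operatorname{rank}(\sigma) < \sqrt{m+2}$.

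Next I would set up a perturbation confined to the range of $\sigma$. Writing $r = \operatorname{rank}(\sigma)$ and $\sigma = V \Lambda V^\dagger$ with $V \in \mathbb{C}^{d \times r}$ having orthonormal columns and $\Lambda \succ 0$ diagonal, I consider $\sigma_t = \sigma + t\, V M V^\dagger = V(\Lambda + tM)V^\dagger$, where $M$ ranges over the real vector space of $r \times r$ Hermitian matrices, which has real dimension $r^2$. Because $\Lambda \succ 0$, we have $\Lambda + tM \succ 0$ and hence $\sigma_t \succeq 0$ for all sufficiently small $|t|$, so positive semidefiniteness is automatically preserved. The remaining feasibility conditions are linear in $M$: using $V^\dagger V = I_r$ the normalization becomes $\Tr(M) = 0$, and each measurement constraint becomes $\Tr\big((V^\dagger A_i V)\, M\big) = 0$. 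These amount to $m+1$ homogeneous real linear equations on the $r^2$-dimensional space of Hermitian $M$.

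Then I would invoke extremality. If $r^2 > m+1$, these $m+1$ constraints have a nonzero solution $M$, yielding a direction $H = V M V^\dagger \neq 0$ with $\sigma \pm t H \in \mathcal{F}$ for small $t>0$. This writes $\sigma = \tfrac{1}{2}(\sigma + tH) + \tfrac{1}{2}(\sigma - tH)$ as a nontrivial convex combination of distinct feasible points, contradicting extremality. Hence $r^2 \le m+1$; since $r$ and $m$ are integers this is equivalent to $r^2 < m+2$, i.e. $\operatorname{rank}(\sigma) < \sqrt{m+2}$, completing the argument.

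The main obstacle is getting the dimension count exactly right over the complex field: the admissible perturbations live in the $r^2$-dimensional space of $r \times r$ Hermitian matrices (rather than $\tbinom{r+1}{2}$, which is the count for real symmetric matrices), and one must confirm that perturbations supported on $\operatorname{range}(\sigma)$ stay in the PSD cone for small $t$ — precisely what $\Lambda \succ 0$ guarantees. One must also remember to count the trace normalization as one affine constraint alongside the $m$ measurement constraints, since that is exactly what turns the estimate $r^2 \le m+1$ into the stated $\sqrt{m+2}$ threshold.
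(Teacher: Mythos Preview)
Your proof is correct and the dimension count is the same one the paper uses, but the overall route is genuinely different. The paper argues by \emph{iterative rank reduction}: starting from the given $\rho$ itself, it looks at Hermitian operators $H$ supported on $\operatorname{supp}\rho$ that are annihilated by the extended framework $\mathbf{B}=\mathbf{A}\cup\{I\}$; the same $r^2$ versus $m+1$ count guarantees a nonzero such $H$ whenever $r^2\ge m+2$. Since $H$ is traceless Hermitian it has eigenvalues of both signs, so pushing $\epsilon$ in $\rho+\epsilon H$ until one eigenvalue hits zero produces a feasible density matrix of strictly smaller rank, and one repeats until $r^2<m+2$. You instead jump directly to an extreme point of the feasible spectrahedron $\mathcal{F}$ and use the very same perturbation count to contradict extremality if $r^2>m+1$.

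What each approach buys: your extreme-point argument is cleaner as a pure existence proof, needing no iteration and no tracking of the intermediate value of $\epsilon$ at which rank drops. The paper's version is more constructive: it gives an explicit procedure for turning any feasible $\rho$ into a low-rank $\sigma$ by a finite sequence of rank-lowering moves, which is useful if one cares about actually producing the low-rank witness rather than merely knowing it exists.
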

    
    \begin{proof}
    Let $\rho$ have spectral decomposition:
    \begin{equation*}
        \rho = \sum\nolimits_{i=1}^{r} \lambda_i |\psi_i\rangle \langle \psi_i|,
    \end{equation*}
    where $r = \operatorname{rank} (\rho)$ and $|\psi_i\rangle$ are the eigenvectors of $\rho$ associated with the positive eigenvalues ${\lambda_i}$. These eigenvectors form an orthonormal set spanning the support of $\rho$, denoted $\operatorname{supp}\rho$. Consider $B(\operatorname{supp}\rho)$, the set of bounded linear operators on the support of $\rho$. Particularly, we focus on its Hermitian subset, which is parametrized by $r^2$ real parameters.
    
    Extending the measurement framework to $\mathbf{B} = \mathbf{A} \cup \{ I \}$, which comprises $m+1$ observables, we define a subspace $M$ of Hermitian operators in $B(\operatorname{supp}\rho)$:
    \begin{equation*}
        M = \{ H \in B(\operatorname{supp}\rho): 
        \Vec{\mathcal{M}}_{\mathbf{B}}(H) = \Vec{0}
        \text{ and }
        H=H^{\dagger}
        \}.
    \end{equation*}
    The dimension of $M$ is at least $r^2 - (m + 1)$.
    
    If $r^2 - (m + 1) \ge 1$, $M$ is non-empty. For any non-zero $H \in M$, define $\sigma = \rho + \epsilon H$ for a small $\epsilon$. Since $\Vec{\mathcal{M}}_{\mathbf{B}}(H) = \Vec{0}$, we can see that the measurement vector and trace of $\sigma$ are unchanged:
    \begin{equation*}
    \Vec{\mathcal{M}}_{\mathbf{A}}(\sigma) = \Vec{\mathcal{M}}_{\mathbf{A}}(\rho) \text{ and } \Tr(\sigma) = \Tr(\rho) = 1.
    \end{equation*}

    Since the Hermitian operator $H$ with $\Tr(H) = 0$ contains both positive and negative eigenvalues, the same holds for $\rho + \lambda H$ for $\lambda \gg 1$. Hence, there exists an intermediate value $\epsilon$ for which $\sigma = \rho + \epsilon H$ is non-negative, but not strictly positive. In this case,
    $\sigma$ is a density matrix with $\operatorname{rank} (\sigma) < r$.
    
    By iteratively applying this procedure, reducing the rank at each step, we eventually reach a density matrix $\sigma$ with rank $r'$ such that $r'^2 - (m+1) < 1$. Thus, we obtain a density matrix $\sigma$ with $\operatorname{rank}(\sigma) < \sqrt{m+2}$ that satisfies $\Vec{\mathcal{M}}_{\mathbf{A}}(\sigma) = \Vec{\mathcal{M}}_{\mathbf{A}}(\rho)$.
    \end{proof}

    This theorem implies that when reconstructing a density matrix from measurement data, it suffices to consider density matrices of rank less than $\sqrt{m+2}$, where $m$ is the number of observables in the measurement framework.

    Conventional approaches to solving the UDA problem typically involve optimization over the entire space of density matrices, with ranks potentially as high as the Hilbert space dimension. However, by leveraging Theorem~\ref{theo}, we can refine our approach by considering only density matrices of restricted rank. We establish the following corollary.
    
    \begin{corollary}[Existence of Equivalent Low-Rank Density Matrix in UDA Problem]
    \label{coro}
    Given a measurement framework $\mathbf{A} = \{A_1, A_2, ..., A_{m}\}$ and a pure state $\ket{\psi}$,  for any density matrix $\rho$, there exists a density matrix $\sigma$ with rank less than $\sqrt{m+3}$ such that 
    \begin{equation*}
        \Tr(\ket{\psi}\bra{\psi}\sigma) = \Tr(\ket{\psi}\bra{\psi}\rho) \,
        \text{ and } \,
        \Vec{\mathcal{M}}_{\mathbf{A}}(\sigma) = \Vec{\mathcal{M}}_{\mathbf{A}}(\rho).
    \end{equation*}
    \end{corollary}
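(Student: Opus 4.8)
The plan is to reduce this corollary directly to Theorem~\ref{theo} by recognizing that the additional fidelity constraint is itself nothing more than a measurement outcome. The quantity $\Tr(\ket{\psi}\bra{\psi}\rho)$ is exactly the expectation value of the observable $\ket{\psi}\bra{\psi}$, which is a legitimate Hermitian operator. Hence preserving the fidelity under the unknown $\sigma$ is equivalent to preserving the measurement outcome of this one extra observable. The whole strategy is therefore to absorb $\ket{\psi}\bra{\psi}$ into the measurement framework and invoke the previous theorem.

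Concretely, first I would define the augmented measurement framework
\begin{equation*}
    \mathbf{A}' = \mathbf{A} \cup \{ \ket{\psi}\bra{\psi} \} = \{ A_1, A_2, \dots, A_m, \ket{\psi}\bra{\psi} \},
\end{equation*}
which consists of $m+1$ observables. The associated measurement vector $\Vec{\mathcal{M}}_{\mathbf{A}'}$ simply stacks the original $m$ entries of $\Vec{\mathcal{M}}_{\mathbf{A}}$ with one final entry equal to $\Tr(\ket{\psi}\bra{\psi}\,\cdot\,)$.

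Next I would apply Theorem~\ref{theo} to $\mathbf{A}'$ and the given density matrix $\rho$. Since $\mathbf{A}'$ contains $m+1$ observables, the theorem guarantees a density matrix $\sigma$ with $\operatorname{rank}(\sigma) < \sqrt{(m+1)+2} = \sqrt{m+3}$ satisfying $\Vec{\mathcal{M}}_{\mathbf{A}'}(\sigma) = \Vec{\mathcal{M}}_{\mathbf{A}'}(\rho)$. Unpacking this single vector equality componentwise yields both desired conclusions at once: the first $m$ components give $\Vec{\mathcal{M}}_{\mathbf{A}}(\sigma) = \Vec{\mathcal{M}}_{\mathbf{A}}(\rho)$, while the last component gives $\Tr(\ket{\psi}\bra{\psi}\sigma) = \Tr(\ket{\psi}\bra{\psi}\rho)$.

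In this case there is essentially no hard step; the only point requiring a moment of care is the bookkeeping of the observable count, making sure that appending a single operator raises the bound from $\sqrt{m+2}$ to $\sqrt{m+3}$ rather than, say, $\sqrt{m+4}$. The conceptual move worth emphasizing is simply the recognition that the fidelity functional $\rho \mapsto \Tr(\ket{\psi}\bra{\psi}\rho)$ is linear and thus indistinguishable from any other measurement, which is what allows the corollary to follow from the theorem without reopening the rank-reduction argument.
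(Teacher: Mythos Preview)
Your proposal is correct and matches the paper's proof essentially verbatim: the paper also forms $\mathbf{A}' = \mathbf{A} \cup \{\ket{\psi}\bra{\psi}\}$ with $m+1$ observables, applies Theorem~\ref{theo} to obtain $\sigma$ with rank less than $\sqrt{m+3}$, and then reads off the two desired equalities from $\Vec{\mathcal{M}}_{\mathbf{A}'}(\sigma) = \Vec{\mathcal{M}}_{\mathbf{A}'}(\rho)$.
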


    \begin{proof}
        Extend the measurement framework $\mathbf{A}$ by adding the observable $\ket{\psi}\bra{\psi}$ to form $\mathbf{A'} = \mathbf{A} \cup \{ \ket{\psi}\bra{\psi} \}$, which now contains $m + 1$ observables. Applying Theorem~\ref{theo} to this extended framework $\mathbf{A'}$, we conclude that for any density matrix $\rho$, there exists a density matrix $\sigma$ with rank less than $\sqrt{m + 3}$ such that
        \begin{equation*}
        \Vec{\mathcal{M}}_{\mathbf{A'}}(\sigma) = \Vec{\mathcal{M}}_{\mathbf{A'}}(\rho).
        \end{equation*}
        This implies that both $\Tr(\ket{\psi}\bra{\psi}\sigma) = \Tr(\ket{\psi}\bra{\psi}\rho)$ and $\Vec{\mathcal{M}}_{\mathbf{A}}(\sigma) = \Vec{\mathcal{M}}_{\mathbf{A}}(\rho)$ hold. Therefore, the corollary follows directly from Theorem~\ref{theo}.
    \end{proof}

    This corollary ensures that for any density matrix $\rho$ as a solution to the UDA problem for target state $\ket{\psi}$, there always exists an equivalent low-rank density matrix $\sigma$ with rank less than $\sqrt{m+3}$, where $m$ is the number of measurement observables. This low-rank density matrix $\sigma$ is equivalent to $\rho$, since it not only reproduces the same measurement outcomes but also maintains the fidelity with $\ket{\psi}$. Consequently, when solving UDA problems, we can restrict our search to a subset of low-rank density matrices, reducing the number of variables in optimization.

    \section{Methodology}
    \label{sec:method}
    
    \subsection*{Parameterization of Quantum States}

    To address UD problems through optimization, we parameterize the variable density matrix using an ensemble of pure states: $\rho = \sum_{i=1}^{r} p_i |\phi_i\rangle \langle \phi_i|$. The number of pure states in the ensemble is constrained by the maximum allowed rank $r$ for $\rho$. 
    
    To streamline the parameterization, we represent the ensemble of pure states using a matrix $ V \in \mathbb{C}^{d \times r} $, where each column corresponds to a non-normalized pure state:
    \begin{equation}
        V = \begin{pmatrix} 
            |\tilde{\phi_1}\rangle & |\tilde{\phi_2}\rangle & \dots & |\tilde{\phi_r}\rangle 
        \end{pmatrix}.
    \end{equation}
    Each pure state in the ensemble is then normalized as:
    \begin{equation}
        |\phi_i\rangle = \frac{|\tilde{\phi_i}\rangle}{\sqrt{\langle \tilde{\phi_i} | \tilde{\phi_i} \rangle}}.
    \end{equation}
    This approach eliminates the need for explicit normalization constraints during optimization. To incorporate the probabilities $p_i$, we introduce a vector $ \Vec{q} \in \mathbb{R}^r $ and probabilities $p_i$ are represented by
    \begin{equation}
        p_i = \frac{q_i^2}{\|\Vec{q}\|_2^2}.
    \end{equation}
    This formulation ensures that the probabilities are non-negative and sum to one.

   We may avoid explicitly constructing $\rho$ in numerical calculation. The fidelity $f$ with the target state $\ket{\psi}$ becomes a function of $V$ and $\Vec{q}$:
    \begin{equation}
        f(V, \Vec{q}) = \sum_{i=1}^{r}
        \left(
         \frac{q_i^2}{\|\Vec{q}\|_2^2} \,
        \frac{\| \langle \psi | \phi_i' \rangle \|^2}{\langle \phi_i' | \phi_i' \rangle } 
        \right).
    \end{equation}
    The measurement constraints are encapsulated by the vector $\Vec{g}$:
    \begin{equation}
        \Vec{g}(V, \Vec{q}) = \Vec{\mathcal{M}}_{\mathbf{A}}(\rho) - \Vec{\mathcal{M}}_{\mathbf{A}}(\ket{\psi}\bra{\psi}).
    \end{equation}
    Each component of $ \Vec{g} $ can be given by:
    \begin{equation}
        g_j(V, \Vec{q}) = 
        \sum_{i=1}^{r} \left( \frac{q_i^2}{\|\Vec{q}\|_2^2} \, \frac{\langle \phi_i' | A_j | \phi_i' \rangle }{ \langle \phi_i' | \phi_i' \rangle } 
        \right)
        - \langle \psi | A_j | \psi \rangle.
    \end{equation}

    
    The optimization problem is thus formulated to minimize the fidelity while ensuring the measurement outcomes match our requirements:
    \begin{align*}
    \underset{
        V \in \mathbb{C}^{d \times r}, \: \Vec{q} \in \mathbb{R}^r
    }{\text{minimize }} \;\; 
    & f(V, \Vec{q}) \\
    \text{subject to } \;\;\;
    & \Vec{g}(V, \Vec{q})  = \Vec{0}.
    \end{align*}
    
    \subsection*{Numerical Approach to Unique Determinedness}

    When solving UDP problems, $r$ is fixed to be $1$ for optimization among pure state. For UDA determination, $r$ can be set to the largest integer less than $\sqrt{m+3}$ where $m$ is the number of non-identity observables in $\mathbf{A}$. This challenge is non-convex when $r$ is less than the Hilbert space dimension $d$.
    The objective function $f(V, \Vec{q})$ is the fidelity between target state $\ket{\psi}$ and the density matrix determined by $V$ and $\Vec{q}$ . It will be minimized with the constraint that variable density matrix must comply with the given measurement outcomes.
    
    The Augmented Lagrangian Method (ALM) is a powerful technique for solving constrained optimization problems, regardless of the convexity of the problem~\cite{burer2003nonlinear}. It combines principles of the penalty method with Lagrange multipliers to effectively handle equality and inequality constraints. In the context of quantum state tomography, ALM can be utilized to solve both UDP and UDA problems. In this approach, a constrained problem can be reformulate into a unconstrained one, thereby facilitating the application of gradient-based optimization techniques. By combining ALM and gradient-based techniques, one can effectively enforce the necessary constraints and efficiently achieve optimal solutions, even in high-dimensional Hilbert spaces of quantum systems.

    \begin{algorithm}
    \label{alg}
    \caption{State Uniqueness Determination}

    \KwIn{Measurement framework $\mathbf{A}$, 
    target state $\ket{\psi}$, maximum rank $r$,
    scaling factor $\gamma$, uniqueness threshold $\delta$, 
    initial value of penalty parameter $\mu_0$,
    maximum of penalty parameter $\mu_{\text{max}}$,
    and minimum of the weights $\alpha_{\text{min}}$ and $\beta_{\text{min}}$}
    
    \KwOut{Whether $\ket{\psi}$ is uniquely determined under $\mathbf{A}$}

    \BlankLine
    \BlankLine
    
    Randomly initialize $V \in \mathbb{C}^{d \times r}$ and $\Vec{q} \in \mathbb{R}^{r}$ independently from normal distribution $\mathcal{N}(0, 1)$;
    \\
    Normalize $\Vec{q}$ and each pure state $\ket{\phi_i}$ in $V$;
    \\
    Initialize $\Vec{\lambda} \leftarrow \Vec{0}$, $\mu \leftarrow \mu_0$, $\alpha \leftarrow 1$, $\beta \leftarrow 1$;

    \BlankLine
    \BlankLine
    
    \Repeat{convergence criteria are met}{
        Minimize $\mathcal{L}$ over $V$ and $\Vec{q}$ using Adam optimizer;
        \\
        Calculate $\Vec{g}$ according to the optimized $V$ and $\Vec{q}$;
        \\
        \uIf{$\gamma \, \mu \leq \mu_{\text{max}}$}{
            Update $\Vec{\lambda} \leftarrow \Vec{\lambda} + \mu \, \Vec{g}$, 
            $\mu \leftarrow \text{min}(\gamma \, \mu, \, \mu_\text{max})$
        }
        \uElseIf{$\alpha / \gamma \ge \alpha_{\text{min}}$}{
            Update $\Vec{\lambda} \leftarrow (\Vec{\lambda} + \mu\, \Vec{g})/\gamma$, 
            $\alpha \leftarrow \text{max}(\alpha / \gamma, \, \alpha_\text{min})$
        }
        \Else{
            Update $\Vec{\lambda} \leftarrow (\Vec{\lambda} + \mu\, \Vec{g})/\gamma$, 
            $\beta \leftarrow \text{max}(\beta / \gamma, \, \beta_\text{min})$
        }
    }

    \BlankLine
    \BlankLine
    
    \eIf{$f(V, \Vec{q}) > (1 - \delta)$}{
        \Return $\ket{\psi}$ is uniquely determined under $\mathbf{A}$;
    }{
        \Return $\ket{\psi}$ is not uniquely determined under $\mathbf{A}$.
    }
    
    \end{algorithm}

    In our implementation of ALM, the augmented objective function $\mathcal{L}$, which incorporates the original objective function and penalties for constraint violations, is given by:
    \begin{equation}
        \mathcal{L} 
        = \alpha \, f 
        + \beta \, \langle \Vec{\lambda} ,
           \Vec{g} \rangle
        + \frac{\mu}{2} \, \| \Vec{g} \| _{2}^{2}
    \end{equation}
    where $\alpha$ and $\beta$ are the weights for the first two terms,  $\mu$ is a positive scalar penalty parameter, $\Vec{\lambda}$ is a vector of Lagrange multipliers, $\Vec{g}$ encapsulates the equality constraints and indicates the infeasibility of the variables.

    To numerically solve the problem reformulated in ALM, we employ an iterative scheme that combines the Adam optimizer with ALM. Algorithm~\ref{alg} outlines the process for determining the uniqueness of a pure state $\ket{\psi}$ within a measurement framework $\mathbf{A}$, among density matrices whose ranks are bounded by $r$. The algorithm includes a maximum value $\mu_{\text{max}}$ to prevent the unbounded growth of the penalty parameter $\mu$ and the Lagrange multipliers $\Vec{\lambda}$, ensuring numerical stability. When $\mu$ reaches $\mu_{\text{max}}$, the algorithm adapts by scaling down the weights $\alpha$ and $\beta$ in turn. This strategic adjustment facilitates the convergence to feasible solutions while avoiding excessively large $\mu$ and $\Vec{\lambda}$.

    \section{Numerical Experiments}
    \label{sec:numer}
    
    For assessing the unique determinacy of a target pure state $\ket{\psi}$ within a measurement framework $\mathbf{A}$, fidelity serves as the principal metric. The criterion for unique determinacy is expressed as:
    \begin{equation}
    f(V^*, \Vec{q^*}) > (1-\delta).
    \label{eq:unique}
    \end{equation}
    where $V^*$ and $\Vec{q^*}$ are the optimal variables obtained by ALM. We establish a uniqueness threshold $\delta$ of $0.01$. That is, only if the corresponding fidelity exceeds $0.99$, $\ket{\psi}$ is considered uniquely determined among the density matrices whose ranks are bounded by a
    maximum rank $r$. For UDP problems, this maximum rank $r$ is set to $1$. For UDA problems, $r$ is determined based on Corollary~\ref{coro} that limits the rank of variable density matrix according to the number of observables in $\mathbf{A}$.

        
    From the perspective of numerical optimization, we allow a tolerance of $10^{-6}$ for the equality constraint, and all constraints are considered satisfied if $|\Vec{g}|_\infty < 10^{-6}$. The iterative procedure in our approach continues until two criteria are met: (1) all constraints are satisfied, and (2) the change in fidelity between successive iterations falls below $10^{-8}$. If convergence cannot be achieved with constraints satisfied for a particular target pure state, the ALM procedure may be restarted with a different random initialization of the variables. 
    
    In non-convex optimization, solutions obtained from ALM may converge to local optima rather than global ones. To ensure the reliability of our results, we verify the uniqueness properties through consistent reproduction. A target state $\ket{\psi}$ is considered uniquely determined under $\mathbf{A}$ only if the optimal solution consistently achieves a fidelity exceeding $1-\delta$ across $5$ different initializations. For efficiency, the process can be terminated early once a feasible solution is found with a fidelity below $1-\delta$, which suffices to disprove the unique determination of the target state. 

    \subsection*{Analysis of Qutrits}

     \begin{figure}
    	\includegraphics[width=0.45\textwidth]{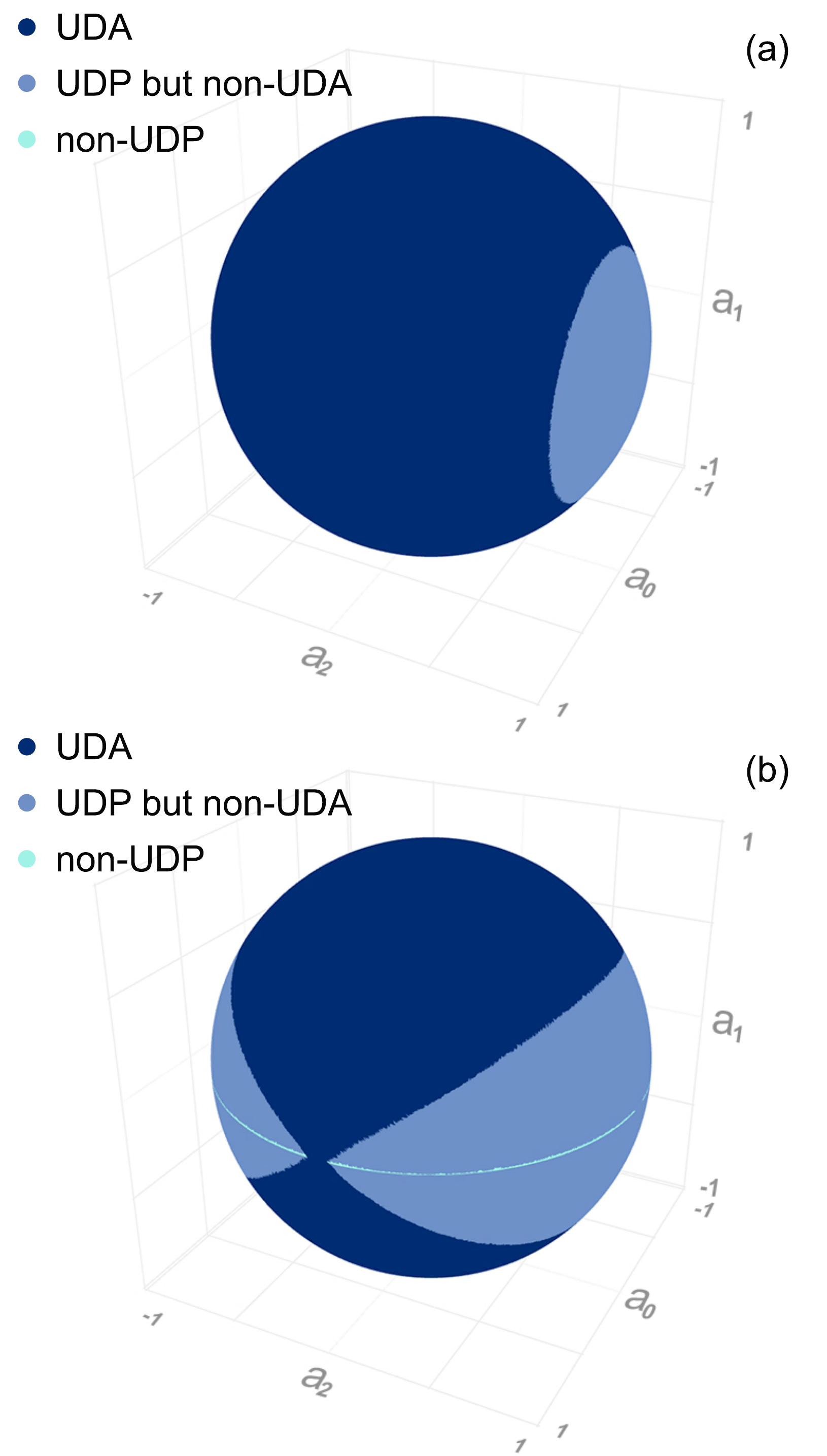}
    	\caption{Distribution of UDA and UDP cases under (a) $\mathbf{A_7}$ and (b) $\mathbf{A_6}$ on the surface of a unit ball with $a_1$ on the vertical axis and $a_0$, $a_2$ on the horizontal axes}
    	\label{fig:qutrit}
    \end{figure}
    
    We examine qutrit states in a three-dimensional Hilbert space, expressed as:
    \begin{equation}
    \ket{\psi} = a_0\ket{0} + a_1\ket{1} + a_2\ket{2},
    \end{equation}
    where the complex coefficients satisfy the normalization condition $|a_0|^2 + |a_1|^2 + |a_2|^2 = 1$. To investigate the uniqueness properties of qutrit states, the complete set of Gell-Mann matrices ${M_1, \ldots, M_8}$, along with the identity matrix, serve as the measurement observables. We consider three measurement frameworks: $\mathbf{A_8} = \{ M_1, M_2, \ldots, M_8 \}$ which includes all 8 Gell-Mann matrices, $\mathbf{A_7} = \mathbf{A_8} \setminus \{ M_8 \}$,  and $\mathbf{A_6} = \mathbf{A_8} \setminus \{ M_8, M_4 \}$ where 
    \begin{equation}
    		M_4 =
    		\begin{pmatrix}
    			0 & 0 & 1 \\
    			0 & 0 & 0 \\
    			1 & 0 & 0 \\
    		\end{pmatrix}
            \quad
            \text{and}
            \quad
            M_8 = \frac{1}{\sqrt{3}}
    		\begin{pmatrix}
    			1 & 0 & 0 \\
    			0 & 1 & 0 \\
    			0 & 0 & -2 \\
    		\end{pmatrix}.
            \label{num:M4M8}
    \end{equation}
    
    Framework $\mathbf{A_8}$, containing all Gell-Mann matrices, can uniquely determine any pure or mixed state, as it is informationally complete. When some observables are removed, the outcomes for UDP and UDA may differ. In such scenarios, some states may be UDP but not UDA. As dicussed in Appendix~\ref{apd:qutrit}, all pure states are UDP under $\mathbf{A_7}$. Under the framework $\mathbf{A_6}$, almost all pure states are UDP, except when $a_1 = 0$ and $|a_2| \in (0,1)$.
    
    To visualize our classification results, we represent states on the surface of a unit ball by setting $a_0$, $a_1$, and $a_2$ to be real. We sampled $500,000$ pure qutrit states on this surface and an additional $60,000$ states on the circle where $a_1=0$. Using the ALM approch, we confirmed that $100\%$ of the sampled states were UDA under $\mathbf{A_8}$. 
    
    As Figure~\ref{fig:qutrit} indicates, when applying the same method to framework $\mathbf{A_7}$, all sampled pure states, including those with $a_1 = 0$, remained UDP. However, around $20\%$ of sampled states were no longer UDA after the removal of $M_8$.
    
    By further reducing the framework and removing $M_8$ and $M_4$, the measurement framework $\mathbf{A_6}$ retains six non-identity observables. According to Corollary~\ref{coro}, the rank of the variable density matrix for UDA determination is theoretically bounded by 2. Among the 500,000 states sampled on the unit ball surface, $37\%$ were found not UDA, while all were classified as UDP under $\mathbf{A_6}$. For states sampled along the circle where $a_1=0$, UDA cases occurred near $\pm \ket{0}$, while UDP but not UDA cases appeared near $\pm \ket{2}$. All other states were classified as non-UDP under $\mathbf{A_6}$, which shows that the UDP classification results consistent with the theoretical analysis. 
    
    In these cases, the UDA classification was solved as a non-convex problem. To investigate the robustness, the numerical results from ALM were compared with those from SDP, known for global optimality. The comparison revealed minimal discrepancies, where the probabilities of being different in these cases are all less than $2 \times 10^{-5}$, underscoring the effectiveness of our ALM approach.
    
    \subsection*{Analysis of Four-Qubit Symmetric States}
    
    We now investigate symmetric states within four-qubit systems. These states can be expressed as superpositions of five symmetric basis states:
    \begin{equation}
    		|\phi\rangle 
    		= b_0|\omega_0\rangle + b_1|\omega_1\rangle + b_2|\omega_2\rangle + b_3|\omega_3\rangle + b_4|\omega_4\rangle
    \end{equation}
    where $|\omega_i\rangle$ represents the normalized symmetric state with $i$ qubits in the $|1\rangle$ state and $(4-i)$ qubits in the $|0\rangle$ state. Here, we are particularly interested in the uniqueness properties of four-qubit symmetric states by their two-particle reduced density matrices (2-RDMs). The symmetry of these states implies identical 2-RDMs, regardless of which pair of qubits is traced out.

    \subsubsection*{Analysis of Generalized GHZ States}

    \begin{figure}
        \centering
    \includegraphics[width=0.48\textwidth]{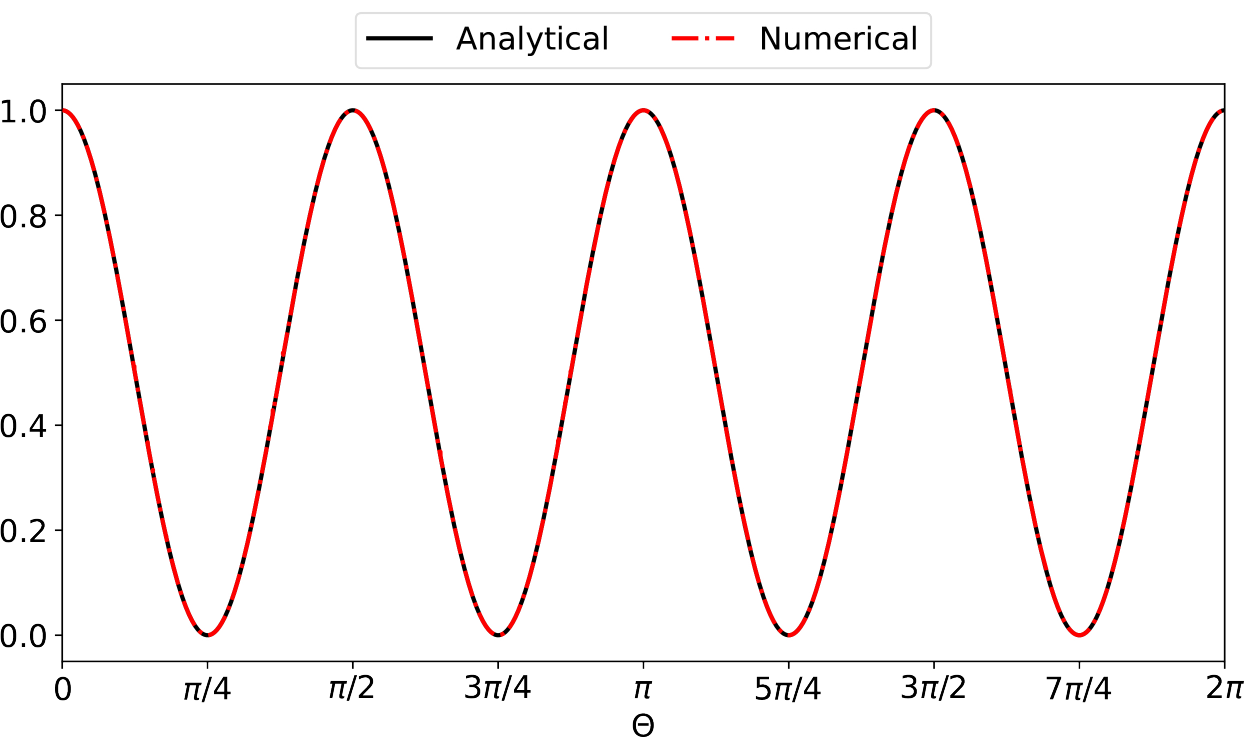}
    	\caption{The minimal fidelity of UDP optimization problems for four-qubit generalized GHZ states, which are represented as $|\psi_\text{GHZ} \rangle = \sin{\Theta}|\omega_0\rangle + \cos{\Theta}|\omega_4\rangle$.}
    	\label{fig:ghz}
    \end{figure}

    We begin our analysis with the subspace of these symmetric states spanned by $\ket{\omega_0}=\ket{0000}$ and $\ket{\omega_4}=\ket{1111}$. States within this subspace are recognized as a generalized Greenberger-Horne-Zeilinger (GHZ) state, which extends the concept of the standard GHZ state to include arbitrary complex coefficients. Ignoring the complex phase, we may parameterize such a state using a angle $\Theta$:
    \begin{equation}
    	|\psi_\text{GHZ} \rangle 
    	= \sin{\Theta}|\omega_0\rangle + \cos{\Theta}|\omega_4\rangle.
    	\label{num:eq:ghz}
    \end{equation}
    As shown in Appendix~\ref{apd:sub:ghz}, these generalized GHZ states are not UDP by their 2-RDMs except when $\Theta = k\,\pi/2$ with integer $k$, corresponding to $\pm\ket{\omega_0}$ or $\pm\ket{\omega_4}$. For this case, the global optimal solution of the UDP optimization problem can be expressed analytically with $\Theta$:
    \begin{equation}
    	|\phi_\text{GHZ}^* \rangle 
    	=  \sin{\Theta}|\omega_0\rangle - \cos{\Theta}|\omega_4\rangle,
     \end{equation}
     along with the corresponding minimal fidelity 
     \begin{equation}
        f^*_\text{GHZ} = \| \langle \psi_\text{GHZ} | \phi^*_\text{GHZ} \rangle \|^2 = \cos^2{2 \Theta}.
    	\label{num:sol:ghz}
    \end{equation}
    To validate our ALM approach, we compare its numerical results with these analytical solutions. We sampled $15，000$ pure states by varying the value of $\Theta$. As Figure~\ref{fig:ghz} illustrates, our numerical results for four-qubit generalized GHZ states can closely reproduce the analytical results, with a mean square error of minimal fidelity less than $5 \times 10^{-8}$.

    \subsubsection*{Analysis of Special Symmetric States}

    \begin{figure}
        \includegraphics[width=0.45\textwidth]{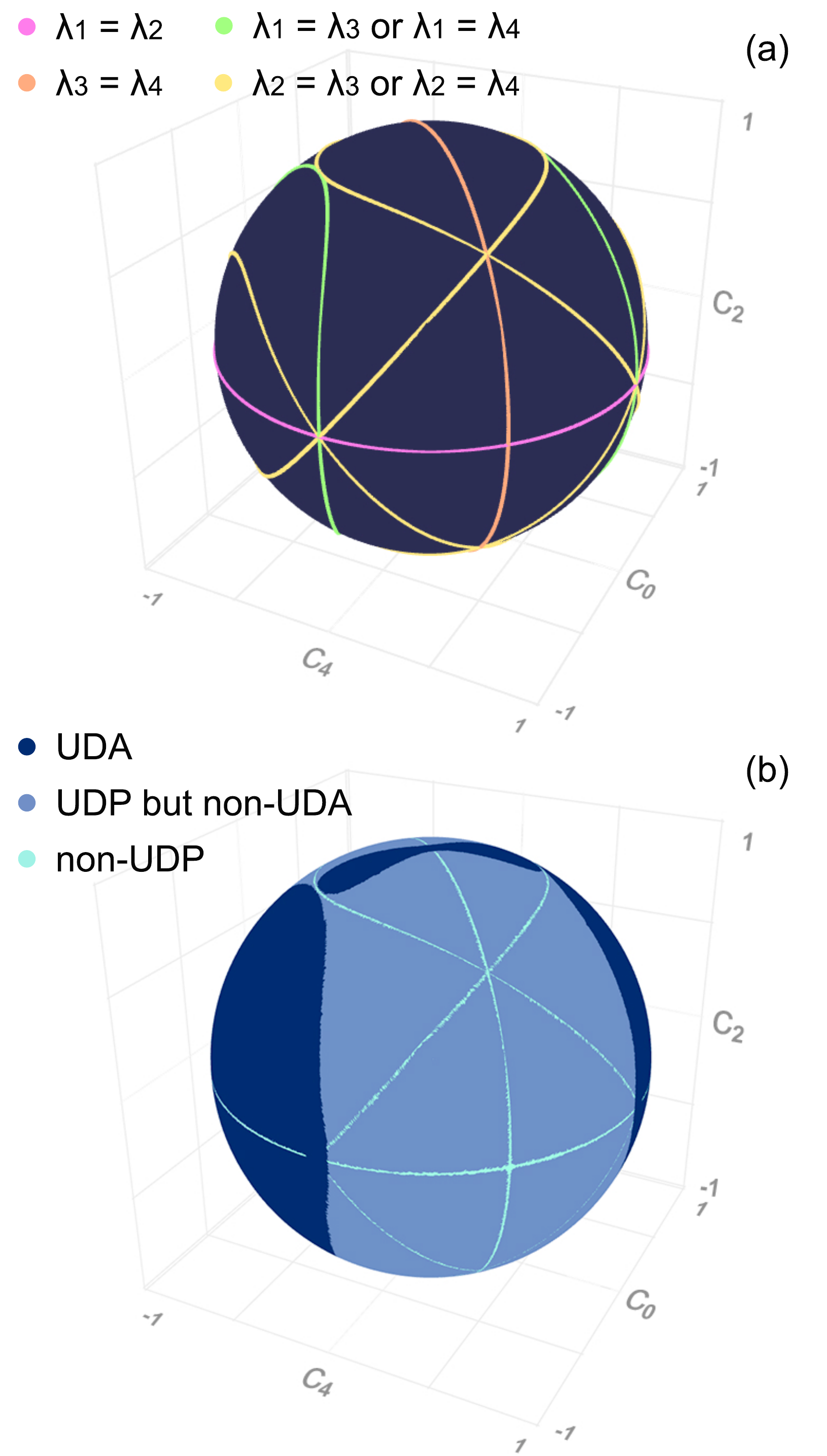}
    	\caption{(a) Parameter space of the symmetric state $\ket{\psi}$ in the form (\ref{num:eq:psi}), showing the colored curves corresponding to four types of 2-RDM eigenvalue degeneracies: (i) $\lambda_1 = \lambda_2$, (ii) $\lambda_3 = \lambda_4$, (iii) $\lambda_1 = \lambda_3$ or $\lambda_1 = \lambda_4$, and (iv) $\lambda_2 = \lambda_3$ or $\lambda_2 = \lambda_4$. (b) Classification of UDA and UDP cases across the parameter space of $\ket{\psi}$ in the form (\ref{num:eq:psi}), categorized into three groups: (A) UDA, (B) UDP but not UDA, and (C) neither UDP nor UDA.}
    	\label{fig:q4sym}
    \end{figure}

    We then expanded our investigation to a larger subset of four-qubit symmetric states, which can be represented as:
    \begin{equation}
    	|\psi\rangle 
    	= c_0|\omega_0\rangle + c_2|\omega_2\rangle + c_4|\omega_4\rangle,
    	\label{num:eq:psi}
    \end{equation}
    where $c_0$, $c_2$, and $c_4$ are real coefficients satisfying the normalization condition. Each 2-RDM for state $|\psi\rangle$ can be expressed as:
    \begin{equation}
    		\rho_{2,\psi} =
    		\begin{pmatrix}
    			c_0^2+\frac{c_2^2}{6} & 0 & 0 & \frac{c_2(c_0+c_4)}{\sqrt{6}} \\
    			0 & \frac{c_2^2}{3} & \frac{c_2^2}{3} & 0 \\
    			0 & \frac{c_2^2}{3} & \frac{c_2^2}{3} & 0 \\
    			\frac{c_2(c_0+c_4)}{\sqrt{6}} & 0 & 0 & c_4^2+\frac{c_2^2}{6} \\
    		\end{pmatrix}
    		\label{num:eq:psi_rdm}
    \end{equation}
    
    As established in previous research~\cite{PhysRevLett.118.020401}, a state $\ket{\psi}$ in the form (\ref{num:eq:psi}) is UDP by its 2-RDMs provided that all eigenvalues of its 2-RDMs are non-degenerate. However, the uniqueness remains unresolved when eigenvalue degeneracies are present.
        
    The eigenvalues of the 2-RDM in the form (\ref{num:eq:psi_rdm}) can be expressed in terms of $c_0$, $c_2$, and $c_4$:
    \begin{equation}
    	\lambda_1 = 0, \quad \lambda_2 = \frac{2 c_2^2}{3},
    	\label{eq:lamb1and2}
    \end{equation}
    \begin{equation}
    	\lambda_{3,4} = \frac{(c_0^2+c_4^2+\frac{c_2^2}{3}) \pm \sqrt{(c_0+c_4)^2((c_0-c_4)^2+\frac{2c_2^2}{3})}}{2}.
    	\label{eq:lamb3and4}
    \end{equation}
    For the 2-RDM in the form (\ref{num:eq:psi_rdm}), as  indicated by the colored curves in Figure~\ref{fig:q4sym}(a), eigenvalue degeneracies occur in four distinct ways: (i) $\lambda_1 = \lambda_2$, (ii) $\lambda_3 = \lambda_4$, (iii) $\lambda_1 = \lambda_3$ or $\lambda_1 = \lambda_4$, and (iv) $\lambda_2 = \lambda_3$ or $\lambda_2 = \lambda_4$. These degeneracy cases are visualized as colored curves on the parameter space of $\ket{\psi}$ in Figure \ref{fig:q4sym}(a). Each curve corresponds to a set of states where a specific type of eigenvalue degeneracy occurs.
    
    To determine the uniqueness of a four-qubit pure state by its 2-RDMs, we included all single and two-qubit Pauli operators in our measurement framework. This framework contains $66$ non-identity operators, and by Corollary~\ref{coro}, the rank of the variable density matrices for UDA problems can be restricted to at most $8$. However, according to Appendix~\ref{apd:sub:rank}, the specific properties of the 2-RDMs in the form (\ref{num:eq:psi_rdm}) allow us to further reduce the maximum rank of the variable density matrices to $5$.
        
    To explore the uniqueness properties across the full subset, we performed extensive sampling for the non-degenerate cases and degenerate cases separately. For non-degenerate cases, we randomly sampled $500,000$ 4-qubit symmetric states throughout parameter space of $\ket{\psi}$. For each of the four degeneracy types, we sampled $15,000$ states along their respective degeneracy curves on the unit sphere. The classification results are presented in Figure~\ref{fig:q4sym}(b), categorizing the states into three distinct groups: (A) UDA, (B) UDP but not UDA, and (C) not UDP. Our findings indicate that despite the presence of degeneracies, many symmetric states remain UDP, and some are even UDA by their 2-RDMs.

    In this case, the rank of variable density matrix is bounded by $5$ for UDA problems. To assess the effectiveness of our non-convex ALM approach, we compared our results with those obtained by SDP methods. In practice, SDP may sometimes struggle to find the feasible solution for UDA problems, when target states are located near the intersection points of degeneracy cases (iii) $\lambda_1 = \lambda_3$ and $\lambda_1 = \lambda_4$ and (iv) $\lambda_2 = \lambda_3$ and $\lambda_2 = \lambda_4$. For example, one intersection point of (iii) and (iv) corresponds to the target pure state 
    \begin{equation}
        \ket{\psi_\text{inter}} = \frac{1}{2\sqrt{2}} \ket{\omega_0} + \frac{\sqrt{3}}{2} \ket{\omega_2} + \frac{1}{2\sqrt{2}} \ket{\omega_4},
    \end{equation}
    which shares identical 2-RDMs with another symmetric pure state
    \begin{equation}
        \ket{\phi_\text{inter}} = \frac{1}{\sqrt{2}} \ket{\omega_1} + \frac{1}{\sqrt{2}} \ket{\omega_3}.
    \end{equation}
    Our ALM approach identifies target pure states near these intersection points as non-UDA, with some even classified as non-UDP. Among cases where SDP converges accurately, the ALM approach achieves a accuracy exceeding $99.5\%$ in determining UDA.

    \quad

    \section{Conclusion}
    \label{sec:conclu}

    This paper presents a novel numerical approach for investigating the uniqueness properties of quantum states in tomography, in which UDP and UDA problems are formulated as optimization challenges. We also have proposed a theorem which shows the existence of low-rank solution in QST. This theorem allows us to reformulate the UDA problem with low-rank constraints, which can reduce number of variables in optimization but introduce non-convexity on the other hand. By leveraging the Augmented Lagrangian Method, we have developed a unified framework to handle the constrained non-convex optimization challenges, inherent in both UDP and low-rank UDA determinations.

    Through comprehensive numerical experiments on qutrit systems and four-qubit symmetric states, we validated our theoretical classifications and demonstrated the robustness of our approach. For qutrit systems, we illustrated the distribution of UDP and UDA states under various measurement frameworks, confirming consistency with theoretical predictions and analytical solutions. In analyzing four-qubit symmetric states, we focused on specific forms and their 2-RDMs. The close agreement between our ALM results and analytical solutions for generalized GHZ states, along with low discrepancies compared to SDP methods, underscores the accuracy and reliability of our approach. By visualizing the distribution of three UD categories, we provided clear insights into how eigenvalue degeneracies in RDMs affect the uniqueness properties of states. This enhances our understanding of the role that symmetries and degeneracies play in the reconstruction of quantum states from local measurements.
        
    Future research directions may include extending our methods to even larger and more complex quantum systems, exploring the impact of different types of noise and errors in measurements, and applying our approach to experimental data from actual quantum devices. Additionally, investigating alternative optimization techniques or further refining the low-rank reformulation could yield even more efficient solutions.

    \bibliographystyle{unsrt}
    \bibliography{ref.bib}

\begin{thebibliography}{10}

\bibitem{RevModPhys.81.299}
A.~I. Lvovsky and M.~G. Raymer.
\newblock Continuous-variable optical quantum-state tomography.
\newblock {\em Rev. Mod. Phys.}, 81:299--332, Mar 2009.

\bibitem{PhysRevLett.109.120403}
Matthias Christandl and Renato Renner.
\newblock Reliable quantum state tomography.
\newblock {\em Phys. Rev. Lett.}, 109:120403, Sep 2012.

\bibitem{Qi2013}
Bo~Qi, Zhibo Hou, Li~Li, Daoyi Dong, Guoyong Xiang, and Guangcan Guo.
\newblock Quantum state tomography via linear regression estimation.
\newblock {\em Scientific Reports}, 3(1):3496, 2013.

\bibitem{Paris2004}
M.~Paris and J.~Rehacek.
\newblock {\em Quantum State Estimation}.
\newblock Lecture Notes in Physics. Springer Berlin Heidelberg, 2004.

\bibitem{Haeffner2005}
H.~Häffner, W.~Hänsel, C.~F. Roos, J.~Benhelm, D.~Chek-al kar, M.~Chwalla, T.~Körber, U.~D. Rapol, M.~Riebe, P.~O. Schmidt, C.~Becher, O.~Gühne, W.~Dür, and R.~Blatt.
\newblock Scalable multiparticle entanglement of trapped ions.
\newblock {\em Nature}, 438(7068):643--646, 2005.

\bibitem{Kosaka2009}
Hideo Kosaka, Takahiro Inagaki, Yoshiaki Rikitake, Hiroshi Imamura, Yasuyoshi Mitsumori, and Keiichi Edamatsu.
\newblock Spin state tomography of optically injected electrons in a semiconductor.
\newblock {\em Nature}, 457(7230):702--705, 2009.

\bibitem{PhysRevLett.108.040502}
M.~Baur, A.~Fedorov, L.~Steffen, S.~Filipp, M.~P. da~Silva, and A.~Wallraff.
\newblock Benchmarking a quantum teleportation protocol in superconducting circuits using tomography and an entanglement witness.
\newblock {\em Phys. Rev. Lett.}, 108:040502, Jan 2012.

\bibitem{Leibfried2005}
D.~Leibfried, E.~Knill, S.~Seidelin, J.~Britton, R.~B. Blakestad, J.~Chiaverini, D.~B. Hume, W.~M. Itano, J.~D. Jost, C.~Langer, R.~Ozeri, R.~Reichle, and D.~J. Wineland.
\newblock Creation of a six-atom ‘schrödinger cat’ state.
\newblock {\em Nature}, 438(7068):639--642, 2005.

\bibitem{PhysRevLett.105.150401}
David Gross, Yi-Kai Liu, Steven~T. Flammia, Stephen Becker, and Jens Eisert.
\newblock Quantum state tomography via compressed sensing.
\newblock {\em Phys. Rev. Lett.}, 105:150401, Oct 2010.

\bibitem{PhysRevLett.116.230501}
Dawei Lu, Tao Xin, Nengkun Yu, Zhengfeng Ji, Jianxin Chen, Guilu Long, Jonathan Baugh, Xinhua Peng, Bei Zeng, and Raymond Laflamme.
\newblock Tomography is necessary for universal entanglement detection with single-copy observables.
\newblock {\em Phys. Rev. Lett.}, 116:230501, Jun 2016.

\bibitem{Vanner2013}
M.~R. Vanner, J.~Hofer, G.~D. Cole, and M.~Aspelmeyer.
\newblock Cooling-by-measurement and mechanical state tomography via pulsed optomechanics.
\newblock {\em Nature Communications}, 4(1):2295, 2013.

\bibitem{Carrasquilla2019}
Juan Carrasquilla, Giacomo Torlai, Roger~G Melko, and Leandro Aolita.
\newblock Reconstructing quantum states with generative models.
\newblock {\em Nature Machine Intelligence}, 1(3):155--161, 2019.

\bibitem{Ahmed2021}
Shahnawaz Ahmed, Carlos S\'anchez Mu\~noz, Franco Nori, and Anton~Frisk Kockum.
\newblock Quantum state tomography with conditional generative adversarial networks.
\newblock {\em Phys. Rev. Lett.}, 127:140502, Sep 2021.

\bibitem{PhysRevApplied.21.014037}
Zheng An, Jiahui Wu, Muchun Yang, D.~L. Zhou, and Bei Zeng.
\newblock Unified quantum state tomography and hamiltonian learning: A language-translation-like approach for quantum systems.
\newblock {\em Phys. Rev. Appl.}, 21:014037, Jan 2024.

\bibitem{Heinosaari2013}
Teiko Heinosaari, Luca Mazzarella, and Michael~M. Wolf.
\newblock Quantum tomography under prior information.
\newblock {\em Communications in Mathematical Physics}, 318(2):355--374, 2013.

\bibitem{chen2013uniqueness}
Jianxin Chen, Hillary Dawkins, Zhengfeng Ji, Nathaniel Johnston, David Kribs, Frederic Shultz, and Bei Zeng.
\newblock Uniqueness of quantum states compatible with given measurement results.
\newblock {\em Physical Review A—Atomic, Molecular, and Optical Physics}, 88(1):012109, 2013.

\bibitem{ma2016pure}
Xian Ma, Tyler Jackson, Hui Zhou, Jianxin Chen, Dawei Lu, Michael~D Mazurek, Kent~AG Fisher, Xinhua Peng, David Kribs, Kevin~J Resch, et~al.
\newblock Pure-state tomography with the expectation value of pauli operators.
\newblock {\em Physical Review A}, 93(3):032140, 2016.

\bibitem{PhysRevLett.89.207901}
N.~Linden, S.~Popescu, and W.~K. Wootters.
\newblock Almost every pure state of three qubits is completely determined by its two-particle reduced density matrices.
\newblock {\em Phys. Rev. Lett.}, 89:207901, Oct 2002.

\bibitem{PhysRevA.70.010302}
Lajos Di\'osi.
\newblock Three-party pure quantum states are determined by two two-party reduced states.
\newblock {\em Phys. Rev. A}, 70:010302, Jul 2004.

\bibitem{PhysRevLett.89.277906}
N.~Linden and W.~K. Wootters.
\newblock The parts determine the whole in a generic pure quantum state.
\newblock {\em Phys. Rev. Lett.}, 89:277906, Dec 2002.

\bibitem{Chen2012}
Jianxin Chen, Zhengfeng Ji, Mary~Beth Ruskai, Bei Zeng, and Duan-Lu Zhou.
\newblock {Comment on some results of Erdahl and the convex structure of reduced density matrices}.
\newblock {\em Journal of Mathematical Physics}, 53(7):072203, 07 2012.

\bibitem{PhysRevA.86.022339}
Jianxin Chen, Zhengfeng Ji, Bei Zeng, and D.~L. Zhou.
\newblock From ground states to local hamiltonians.
\newblock {\em Phys. Rev. A}, 86:022339, Aug 2012.

\bibitem{PhysRevA.88.012109}
Jianxin Chen, Hillary Dawkins, Zhengfeng Ji, Nathaniel Johnston, David Kribs, Frederic Shultz, and Bei Zeng.
\newblock Uniqueness of quantum states compatible with given measurement results.
\newblock {\em Phys. Rev. A}, 88:012109, Jul 2013.

\bibitem{PhysRevLett.118.020401}
Tao Xin, Dawei Lu, Joel Klassen, Nengkun Yu, Zhengfeng Ji, Jianxin Chen, Xian Ma, Guilu Long, Bei Zeng, and Raymond Laflamme.
\newblock Quantum state tomography via reduced density matrices.
\newblock {\em Phys. Rev. Lett.}, 118:020401, Jan 2017.

\bibitem{Wang2020Pure}
Yu~Wang and Keren Li.
\newblock Pure state tomography with fourier transformation.
\newblock {\em Advanced Quantum Technologies}, 5, 2020.

\bibitem{Huang2018}
ShiLin Huang, JianXin Chen, YouNing Li, and Bei Zeng.
\newblock Quantum state tomography for generic pure states.
\newblock {\em Science China Physics, Mechanics \& Astronomy}, 61(11):110311, 2018.

\bibitem{Sawicki2013}
A~Sawicki, M~Walter, and M~Kuś.
\newblock When is a pure state of three qubits determined by its single-particle reduced density matrices?
\newblock {\em Journal of Physics A: Mathematical and Theoretical}, 46(5):055304, jan 2013.

\bibitem{Moroder2012}
Tobias Moroder, Philipp Hyllus, Géza Tóth, Christian Schwemmer, Alexander Niggebaum, Stefanie Gaile, Otfried Gühne, and Harald Weinfurter.
\newblock Permutationally invariant state reconstruction.
\newblock {\em New Journal of Physics}, 14(10):105001, oct 2012.

\bibitem{Jia2019}
Zhih-Ahn Jia, Biao Yi, Rui Zhai, Yu-Chun Wu, Guang-Can Guo, and Guo-Ping Guo.
\newblock Quantum neural network states: A brief review of methods and applications.
\newblock {\em Advanced Quantum Technologies}, 2(7-8):1800077, 2019.

\bibitem{Xin2019}
Tao Xin, Sirui Lu, Ningping Cao, Galit Anikeeva, Dawei Lu, Jun Li, Guilu Long, and Bei Zeng.
\newblock Local-measurement-based quantum state tomography via neural networks.
\newblock {\em npj Quantum Information}, 5(1):109, 2019.

\bibitem{Zhang2023vjz}
Chao Zhang, Xuanran Zhu, and Bei Zeng.
\newblock Variational approach to unique determinedness in pure-state tomography.
\newblock {\em Phys. Rev. A}, 109:022425, Feb 2024.

\bibitem{hestenes1969multiplier}
Magnus~R Hestenes.
\newblock Multiplier and gradient methods.
\newblock {\em Journal of optimization theory and applications}, 4(5):303--320, 1969.

\bibitem{powell1969method}
MJD Powell.
\newblock A method for nonlinear optimization in minimization problems, 1969.

\bibitem{nocedal1999numerical}
Jorge Nocedal and Stephen~J Wright.
\newblock {\em Numerical optimization}.
\newblock Springer, 1999.

\bibitem{majumdar2020recent}
Anirudha Majumdar, Georgina Hall, and Amir~Ali Ahmadi.
\newblock Recent scalability improvements for semidefinite programming with applications in machine learning, control, and robotics.
\newblock {\em Annual Review of Control, Robotics, and Autonomous Systems}, 3(1):331--360, 2020.

\bibitem{baldwin2016strictly}
Charles~H Baldwin, Ivan~H Deutsch, and Amir Kalev.
\newblock Strictly-complete measurements for bounded-rank quantum-state tomography.
\newblock {\em Physical Review A}, 93(5):052105, 2016.

\bibitem{burer2003nonlinear}
Samuel Burer and Renato~DC Monteiro.
\newblock A nonlinear programming algorithm for solving semidefinite programs via low-rank factorization.
\newblock {\em Mathematical programming}, 95(2):329--357, 2003.

\bibitem{wang2023solving}
Jie Wang and Liangbing Hu.
\newblock Solving low-rank semidefinite programs via manifold optimization.
\newblock {\em arXiv preprint arXiv:2303.01722}, 2023.

\bibitem{diamond2016cvxpy}
Steven Diamond and Stephen Boyd.
\newblock Cvxpy: A python-embedded modeling language for convex optimization.
\newblock {\em Journal of Machine Learning Research}, 17(83):1--5, 2016.

\bibitem{mosek}
MOSEK ApS.
\newblock {\em Semidefinite Optimization}, 2024.

\bibitem{scs}
Brendan O'Donoghue, Eric Chu, Neal Parikh, and Stephen Boyd.
\newblock {SCS}: Splitting conic solver, version 3.2.7.
\newblock \url{https://github.com/cvxgrp/scs}, November 2023.

\bibitem{ocpb16scs}
Brendan O'Donoghue, Eric Chu, Neal Parikh, and Stephen Boyd.
\newblock Conic optimization via operator splitting and homogeneous self-dual embedding.
\newblock {\em Journal of Optimization Theory and Applications}, 169(3):1042--1068, June 2016.

\bibitem{chen2012rank}
Jianxin Chen, Zhengfeng Ji, Alexander Klyachko, David~W Kribs, and Bei Zeng.
\newblock Rank reduction for the local consistency problem.
\newblock {\em Journal of mathematical physics}, 53(2), 2012.

\end{thebibliography}

    \clearpage 

    \appendix

    \section{Qutrits and Gell-Mann Matrices}
    \label{apd:qutrit}
    
    Qutrits are the three-level generalization of quantum bits (qubits), which are the fundamental units of quantum information. While a qubit has two possible states (typically denoted as $\ket{0}$ and $\ket{1}$), a qutrit has three orthogonal basis states, often represented as $\ket{0}$, $\ket{1}$  and $\ket{2}$.
    
    For qubits, QST utilizes the Pauli matrices, which form a complete basis for the space of $2 \times 2$ Hermitian matrices. In the case of qutrits, Gell-Mann matrices play a similar role as they form a complete basis for the space of $3 \times 3$ Hermitian matrices. The eight Gell-Mann matrices is usually labelled $M_1$ to $M_8$ and they ares:
    \begin{equation}
    	\begin{aligned}
    		M_1 &= \begin{pmatrix} 0 & 1 & 0 \\ 1 & 0 & 0 \\ 0 & 0 & 0 \end{pmatrix}, &
    		M_2 &= \begin{pmatrix} 0 & -i & 0 \\ i & 0 & 0 \\ 0 & 0 & 0 \end{pmatrix}, \\
    		M_3 &= \begin{pmatrix} 1 & 0 & 0 \\ 0 & -1 & 0 \\ 0 & 0 & 0 \end{pmatrix}, &
    		M_4 &= \begin{pmatrix} 0 & 0 & 1 \\ 0 & 0 & 0 \\ 1 & 0 & 0 \end{pmatrix}, \\
    		M_5 &= \begin{pmatrix} 0 & 0 & -i \\ 0 & 0 & 0 \\ i & 0 & 0 \end{pmatrix}, &
    		M_6 &= \begin{pmatrix} 0 & 0 & 0 \\ 0 & 0 & 1 \\ 0 & 1 & 0 \end{pmatrix}, \\
    		M_7 &= \begin{pmatrix} 0 & 0 & 0 \\ 0 & 0 & -i \\ 0 & i & 0 \end{pmatrix}, &
    		M_8 &= \frac{1}{\sqrt{3}} \begin{pmatrix} 1 & 0 & 0 \\ 0 & 1 & 0 \\ 0 & 0 & -2 \end{pmatrix}.
    	\end{aligned}
    \end{equation}

    For a general qutrit state represented as 
    \begin{equation}
        \ket{\psi} = a_0 \ket{0} + a_1 \ket{1} + a_2 \ket{2},
        \label{apd:qutrit:psi}
    \end{equation}
    where the complex coefficients $a_0$, $a_1$ and $a_2$ satisfy the normalization condition $|a_0|^2 + |a_1|^2 + |a_2|^2 = 1$, the expectation values of the Gell-Mann matrices are given by:
    \begin{equation}
        \begin{aligned}
            & \langle M_1 \rangle = 2 \, \text{Re}(a_0^* a_1), 
            & \langle M_2 \rangle & = 2 \, \text{Im}(a_0^* a_1), \\
            & \langle M_3 \rangle = |a_0|^2-|a_1|^2, 
            & \langle M_4 \rangle & = 2 \, \text{Re}(a_0^* a_2), \\
            & \langle M_5 \rangle = 2 \, \text{Im}(a_0^* a_2),
            & \langle M_6 \rangle & = 2 \, \text{Re}(a_1^* a_2), \\
            & \langle M_7 \rangle = 2 \, \text{Im}(a_1^* a_2),
            & \langle M_8 \rangle & = (|a_0|^2+|a_1|^2-2|a_2|^2)/\sqrt{3}.
        \end{aligned}
    \end{equation}

    When all coefficients $a_0$, $a_1$ and $a_2$ are non-zero, the phases of these complex numbers play a crucial role in the state's characterization. Specifically, the coefficients $a_0$, $a_1$, and $a_2$ share the same phase if and only if all the imaginary expectation values vanish simultaneously: $\langle M_2 \rangle = \langle M_5 \rangle = \langle M_7 \rangle = 0.$

    \begin{theorem}[UDP Property of Qutrit States]
        Suppose $\mathbf{A_8}$ is the measurement framework including all eight Gell-Mann matrices. Consider two reduced measurement frameworks $\mathbf{A_7} = \mathbf{A_8} \setminus \{M_8\}$ and $\mathbf{A_6} = \mathbf{A_8} \setminus \{M_4, M_8\}$. A qutrit state $\ket{\psi}=a_0 \ket{0} + a_1 \ket{1} + a_2 \ket{2}$ with $a_0, a_1, a_2 \in \mathbb{R}$ is always UDP under $\mathbf{A_7}$, and $\ket{\psi}$ is non-UDP under $\mathbf{A_6}$ if and only if $a_0=0$ and $a_1, a_2 \neq 0$.
    \end{theorem}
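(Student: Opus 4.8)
The plan is to reduce both claims to a finite algebraic system obtained by writing the measurement-matching condition $\Vec{\mathcal{M}}_{\mathbf{A}}(\ket{\phi}\bra{\phi})=\Vec{\mathcal{M}}_{\mathbf{A}}(\ket{\psi}\bra{\psi})$ in terms of the amplitudes of a competing pure state $\ket{\phi}=b_0\ket{0}+b_1\ket{1}+b_2\ket{2}$ and the given real amplitudes $a_0,a_1,a_2$. Using the expectation values recorded in this appendix, the constraints translate into equations on the coherences $b_i^{*}b_j$, on the diagonal difference $|b_0|^2-|b_1|^2$ (from $M_3$), and the normalization $\sum_i|b_i|^2=1$. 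Because $\ket{\psi}$ is real, all of its imaginary expectations vanish, so every retained coherence is pinned to the real number $a_i a_j$. First I would write these equations out explicitly for each of $\mathbf{A_7}$ and $\mathbf{A_6}$.

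For $\mathbf{A_7}$ I would show UDP holds for every real $\ket{\psi}$. Here $M_1,\dots,M_7$ pin all three coherences, giving $b_0^{*}b_1=a_0a_1$, $b_0^{*}b_2=a_0a_2$, $b_1^{*}b_2=a_1a_2$, together with $|b_0|^2-|b_1|^2=a_0^2-a_1^2$. A short phase argument (fix the global phase on the first nonzero $b_i$; reality of the retained coherences then forces every remaining $b_j$ real) reduces to real amplitudes. The decisive step is a product trick: multiplying the three coherence equations gives $(b_0b_1b_2)^2=(a_0a_1a_2)^2$, and when all $a_i\neq 0$, dividing by each pairwise product yields $b_i=\epsilon a_i$ with a single common sign $\epsilon=\pm 1$, so $\ket{\phi}=\pm\ket{\psi}$ and $|\langle\psi|\phi\rangle|^2=1$. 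The subcases in which one or two amplitudes vanish I would dispatch directly from the same coherence equations combined with $|b_0|^2-|b_1|^2$ and normalization, each again forcing $\ket{\phi}=\pm\ket{\psi}$. This settles the $\mathbf{A_7}$ half.

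For $\mathbf{A_6}$ the only change is that deleting $M_4$ removes the constraint on $\mathrm{Re}(b_0^{*}b_2)$, while $\mathrm{Im}(b_0^{*}b_2)=0$ (from $M_5$) and the other two coherences stay fixed. The plan is to decide when this single freed real degree of freedom can be altered to produce a second feasible pure state, i.e.\ a rank-one, positive, normalized $\ket{\phi}\bra{\phi}$ with $|\langle\psi|\phi\rangle|^2<1$. I would split on which amplitude of $\ket{\psi}$ vanishes: for generic amplitudes the retained coherences constrain the freed one indirectly and again force $\ket{\phi}=\pm\ket{\psi}$, so UDP is preserved; genuine non-uniqueness can survive only on a degenerate stratum where one amplitude vanishes while the other two are nonzero, which in this parametrization the case analysis identifies as $a_0=0$ with $a_1,a_2\neq 0$. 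On that locus I would exhibit the competitor explicitly by flipping the sign of the freed coherence, then verify that it reproduces the $\mathbf{A_6}$ measurement vector, is a valid pure state, and has fidelity strictly below one.

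I expect the main obstacle to be the $\mathbf{A_6}$ case analysis: precisely delimiting the non-UDP stratum requires showing that off the stratum the retained coherences genuinely rigidify the freed one, so that no admissible alteration exists, and, on the stratum, confirming that the altered configuration remains positive semidefinite and rank one rather than merely satisfying the linear measurement equations. Careful bookkeeping of the vanishing-amplitude subcases, where several equations degenerate simultaneously, is the delicate part of the argument.
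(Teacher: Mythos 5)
Your $\mathbf{A_7}$ half is sound and follows essentially the same route as the paper's appendix proof (match the Gell-Mann expectation values, fix the global phase to force real amplitudes, then pin magnitudes and signs case by case). But your $\mathbf{A_6}$ half contains a genuine error: you have identified the wrong non-UDP stratum. Carrying out the case analysis you sketch does not land on $a_0=0,\ a_1,a_2\neq 0$; it lands on $a_1=0,\ a_0,a_2\neq 0$. Concretely, if $a_0=0$ and $a_1,a_2\neq 0$, the constraints from $M_6,M_7$ give $b_1^{*}b_2=a_1a_2\neq 0$, so $b_1\neq 0$; then $M_1,M_2$ give $b_0^{*}b_1=0$, forcing $b_0=0$, after which $M_3$ and normalization pin $|b_1|=|a_1|$ and $|b_2|=|a_2|$, and a global-phase choice yields $\ket{\phi}=e^{i\theta}\ket{\psi}$ --- such states are UDP under $\mathbf{A_6}$. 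Your explicit competitor construction also collapses on that stratum: the freed degree of freedom is $\mathrm{Re}(b_0^{*}b_2)$, whose target value is $a_0a_2=0$ when $a_0=0$, so ``flipping its sign'' returns the same state. Conversely, states with $a_1=0$ and $a_0,a_2\neq 0$ lie off your claimed stratum yet are non-UDP: removing $M_4$ deletes the only remaining observable sensitive to $\mathrm{Re}(b_0^{*}b_2)$, and since $\langle M_1\rangle=\langle M_6\rangle=0$ provide no leverage when $a_1=0$, both $a_0\ket{0}+a_2\ket{2}$ and $a_0\ket{0}-a_2\ket{2}$ reproduce the full $\mathbf{A_6}$ measurement vector while having mutual fidelity $(a_0^2-a_2^2)^2<1$. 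So your claim that ``off the stratum the retained coherences genuinely rigidify the freed one'' fails exactly where it matters.

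In fairness, the theorem statement itself carries the same index swap, while the paper's proof body and its main text both conclude non-UDP if and only if $a_1=0$ with $a_0,a_2\neq 0$ (cf.\ ``except when $a_1=0$ and $|a_2|\in(0,1)$''), so you reproduced a typo rather than invented the error. But your own plan --- write out the six matching equations and split on vanishing amplitudes --- would immediately have exposed it: the removed observable $M_4$ measures precisely the $(0,2)$ coherence, and that coherence becomes invisible to the remaining observables only when the \emph{bridging} amplitude $a_1$ vanishes, since otherwise the chain $b_0^{*}b_1$, $b_1^{*}b_2$ reconstructs it. As written, the proposal asserts the wrong locus and a witness state that degenerates to the target, so the $\mathbf{A_6}$ direction does not go through; replacing $a_0\leftrightarrow a_1$ throughout that half and exhibiting the competitor $a_0\ket{0}-a_2\ket{2}$ repairs it.
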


    \begin{proof}
    For a qutrit state $\ket{\psi}$ with real coefficients, the expectation values of the Gell-Mann matrices simplify to:
    \begin{align*}
        & \langle M_1 \rangle = 2 \, a_0 a_1, 
        & \langle M_2 \rangle & =\langle M_5 \rangle =\langle M_7 \rangle=0, \\
        & \langle M_3 \rangle = a_0^2-a_1^2, 
        & \langle M_4 \rangle & = 2 \, a_0 a_2, \\
        & \langle M_6 \rangle  = 2 \, a_1 a_2, 
        & \langle M_8 \rangle & = (a_0^2+a_1^2-2a_2^2)/\sqrt{3}.
    \end{align*}

    When $a_1 \neq 0$, $a_0^2$ and $a_1^2$ can be uniquely determined from $\langle M_1 \rangle$ and $\langle M_3 \rangle$:
    \begin{align*}
        a_1^2 = \frac{\sqrt{\langle M_1 \rangle^2+\langle M_3 \rangle^2}-\langle M_3 \rangle}{2},
        \\
        a_0^2 =\frac{\sqrt{\langle M_1 \rangle^2+\langle M_3 \rangle^2}+\langle M_3 \rangle}{2}.
    \end{align*}
    Using the normalization condition $a_0^2+a_1^2+a_2^2=1$, we obtain:
    \begin{align*}
        a_2^2 = 1 - \sqrt{\langle M_1 \rangle^2+\langle M_3 \rangle^2}.
    \end{align*}
    Without loss of generality, we may assume $a_1 > 0$, and the signs of $a_0$ and $a_2$ can be determined by the signs of $\langle M_1 \rangle$ and $\langle M_6 \rangle$, respectively.

    When $a_1 = 0$, we have $\langle M_1 \rangle = \langle M_6 \rangle = 0$, and:
    \begin{align*}
        a_0^2 = \langle M_3 \rangle, \quad a_2^2 = 1 - \langle M_3 \rangle.
    \end{align*}
    Under $\mathbf{A_7}$, the signs of $a_0$ and $a_2$ are determined by $\langle M_4 \rangle$, so $\ket{\psi}$ is still UDP. However, under $\mathbf{A_6}$ where $M_4$ is removed, the signs become indeterminate if both $a_0$ and $a_2$ are non-zero, making $\ket{\psi}$ non-UDP. For example, states $a_0\ket{0} + a_2\ket{2}$ and $a_0\ket{0} - a_2\ket{2}$ share the same measurement outcomes under $\mathbf{A_6}$. The only UDP states under $\mathbf{A_6}$ with $a_1=0$ are those with either $a_0=0$ or $a_2=0$.

    Therefore, any qutrit state $\ket{\psi}$ with real coefficients is UDP under $\mathbf{A_7}$. Under $\mathbf{A_6}$, $\ket{\psi}$ is UDP if and only if $a_1 \neq 0$ or $a_0, a_2 = 0$.
    \end{proof}

    \section{Four-Qubit Symmetric States and Shared Identical 2-RDMs}
    \label{apd:q4sym}

    In this section, we explore four-qubit symmetric states, which can be represented as linear combinations of five basis states:
    \begin{equation}
        \ket{\phi} 
        = b_0 \ket{\omega_0} 
        + b_1 \ket{\omega_1} 
        + b_2 \ket{\omega_2} 
        + b_3 \ket{\omega_3} 
        + b_4 \ket{\omega_4},
        \label{apd:phi}
    \end{equation}
    where $b_0, b_1, b_2, b_3, b_4 \in \mathbb{C}$ and $\ket{\omega_i}$ denotes the normalized symmetric state. The explicit forms of these basis states are:
    \begin{equation}
        \begin{aligned}
            \ket{\omega_0} &= \ket{0000}, 
            \\
            \ket{\omega_1} &= \frac{\ket{0001}+\ket{0010}+\ket{0100}+\ket{1000}}{2}, 
            \\
            \ket{\omega_2} &= \frac{\ket{0011}+\ket{0101}+\ket{0110}+\ket{1001}+\ket{1010}+\ket{1100}}{\sqrt{6}}, 
            \\
            \ket{\omega_3} &= \frac{\ket{0111}+\ket{1011}+\ket{1101}+\ket{1110}}{2}, 
            \\
            \ket{\omega_4} &= \ket{1111}.
        \end{aligned}
    \end{equation}

    Our objective is to investigate the uniqueness properties of four-qubit symmetric states based on their 2-RDMs. Due to their symmetry, these states possess identical 2-RDMs independent of which qubit pair is traced out. Consider a general four-qubit symmetric pure state $\ket{\psi}$ expressed as:
    \begin{equation}
        \ket{\psi} 
        = c_0 \ket{\omega_0} 
        + c_1 \ket{\omega_1} 
        + c_2 \ket{\omega_2} 
        + c_3 \ket{\omega_3} 
        + c_4 \ket{\omega_4},
        \label{apd:psi}
    \end{equation}
    where $c_0, c_1, c_2, c_3, c_4 \in \mathbb{C}$. The symmetric state $\ket{\psi}$ is non-UDP by its 2-RDMs if there exists a symmetric state $\ket{\phi}$ in the form (\ref{apd:phi}) such that they share the identical 2-RDMs and $|\langle\psi|\phi\rangle|^2<1$. For these two states sharing identical 2-RDMs, the following equality constraints must be satisfied:
    \begin{equation}
    \begin{aligned}
        & |b_0|^2+\frac{1}{2}|b_1|^2+\frac{1}{6}|b_2|^2=|c_0|^2+\frac{1}{2}|c_1|^2+\frac{1}{6}|c_2|^2,
        \\
        & |b_4|^2+\frac{1}{2}|b_3|^2+\frac{1}{6}|b_2|^2 = |c_4|^2+\frac{1}{2}|c_3|^2+\frac{1}{6}|c_2|^2,
    	\\
    	& \frac{1}{4}|b_1|^2+\frac{1}{4}|b_3|^2+\frac{1}{3}|b_2|^2 = \frac{1}{4}|c_1|^2+\frac{1}{4}|c_3|^2+\frac{1}{3}|c_2|^2,
    	\\
    	& \frac{1}{2}b_1^*b_0+\frac{1}{\sqrt{6}}b_2^*b_1+\frac{1}{2\sqrt{6}}b_3^*b_2=\frac{1}{2}c_1^*c_0+\frac{1}{\sqrt{6}}c_2^*c_1+\frac{1}{2\sqrt{6}}c_3^*c_2,
        \\
    	& \frac{1}{2}b_4^*b_3+\frac{1}{\sqrt{6}}b_3^*b_2+\frac{1}{2\sqrt{6}}b_2^*b_1 = \frac{1}{2}c_4^*c_3+\frac{1}{\sqrt{6}}c_3^*c_2+\frac{1}{2\sqrt{6}}c_2^*c_1,
        \\
    	& \frac{1}{\sqrt{6}}b_2^*b_0+\frac{1}{2}b_3^*b_1+\frac{1}{\sqrt{6}}b_4^*b_2 = \frac{1}{\sqrt{6}}c_2^*c_0+\frac{1}{2}c_3^*c_1+\frac{1}{\sqrt{6}}c_4^*c_2,
        \label{apd:same_2rdm}
    \end{aligned}
    \end{equation}
    along with the normalization conditions:
    \begin{equation}
        \begin{aligned}
        |b_0|^2+|b_1|^2+|b_2|^2+|b_3|^2+|b_4|^2 = 1,
        \\
        |c_0|^2+|c_1|^2+|c_2|^2+|c_3|^2+|c_4|^2 = 1.
        \end{aligned}
    \end{equation}

    \subsection{Uniqueness Property of Generalized GHZ States}
    \label{apd:sub:ghz}

     Generalized GHZ states are a subclass of symmetric states, which are obtained by setting the coefficients $c_1 = c_2 = c_3 = 0$ in the symmetric state expression (\ref{apd:psi}). Assuming $c_0, c_4 \in \mathbb{R}$, these coefficients can be parameterized as functions of a single angle $\Theta$:
    \begin{equation}
        c_0 = \sin{\Theta}, \quad c_4 = \cos{\Theta}.
    \end{equation}
    Thus, a generalized GHZ state with real coefficients can be expressed as:
    \begin{equation}
        \ket{\psi_\text{GHZ}} = \sin{\Theta} \ket{\omega_0} + \cos{\Theta} \ket{\omega_4}.
        \label{apd:ghz:psi}
    \end{equation}
    
    The uniqueness properties of these states based on their 2-RDMs can be analyzed using the following theorem.
    
    \begin{theorem}[Optimal Solution for UDP Problem of Generalized GHZ States]
    \label{apd:theo_ghz}
    For a generalized GHZ state $\ket{\psi_\text{GHZ}}$ in the form (\ref{apd:ghz:psi}), the global optimal solution to the UDP optimization problem can be expressed as:
    \begin{align*}
        |\phi_\text{GHZ}^* \rangle = \sin{\Theta}|\omega_0\rangle - \cos{\Theta}|\omega_4\rangle,
    \end{align*}
    with the corresponding minimal fidelity given by:
    \begin{align*}
        f^*_\text{GHZ} = \| \langle \psi_\text{GHZ} | \phi^*_\text{GHZ} \rangle \|^2 = \cos^2{2 \Theta}.
    \end{align*}
    \end{theorem}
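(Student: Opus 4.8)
The plan is to reduce the UDP optimization for $|\psi_\text{GHZ}\rangle$ to a minimization over a single relative phase. I would begin by taking an arbitrary symmetric competitor in the form (\ref{apd:phi}) and imposing the identical-2-RDM constraints (\ref{apd:same_2rdm}) against the target coefficients $c_0 = \sin\Theta$, $c_4 = \cos\Theta$, $c_1 = c_2 = c_3 = 0$.

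The decisive step is the third constraint in (\ref{apd:same_2rdm}). Its right-hand side evaluates to $\tfrac{1}{4}|c_1|^2 + \tfrac{1}{4}|c_3|^2 + \tfrac{1}{3}|c_2|^2 = 0$, while its left-hand side $\tfrac{1}{4}|b_1|^2 + \tfrac{1}{4}|b_3|^2 + \tfrac{1}{3}|b_2|^2$ is a sum of non-negative terms; hence $b_1 = b_2 = b_3 = 0$. Any symmetric state sharing the 2-RDMs of a generalized GHZ state must therefore lie in the two-dimensional subspace spanned by $|\omega_0\rangle$ and $|\omega_4\rangle$. Substituting back, the first two constraints of (\ref{apd:same_2rdm}) reduce to $|b_0|^2 = \sin^2\Theta$ and $|b_4|^2 = \cos^2\Theta$, while the three off-diagonal constraints are satisfied identically once the interior coefficients vanish. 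The feasible set is thus exactly $b_0 = |\sin\Theta|\,e^{i\alpha}$, $b_4 = |\cos\Theta|\,e^{i\beta}$ with $\alpha, \beta$ free.

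With the feasible set determined, the objective becomes $|\langle\psi_\text{GHZ}|\phi\rangle|^2 = |\sin\Theta\, b_0 + \cos\Theta\, b_4|^2$, a sum of two complex numbers whose moduli are fixed at $\sin^2\Theta$ and $\cos^2\Theta$ and whose phases can be chosen freely. The magnitude of such a sum is minimized when the two contributions are antiparallel, yielding the value $(\sin^2\Theta - \cos^2\Theta)^2 = \cos^2 2\Theta$ via the identity $\cos 2\Theta = \cos^2\Theta - \sin^2\Theta$. I would close by checking that $|\phi^*_\text{GHZ}\rangle = \sin\Theta|\omega_0\rangle - \cos\Theta|\omega_4\rangle$ indeed realizes the antiparallel configuration, since $\langle\psi_\text{GHZ}|\phi^*_\text{GHZ}\rangle = \sin^2\Theta - \cos^2\Theta$, so that $f^*_\text{GHZ} = \cos^2 2\Theta$ and this value is the global minimum.

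I do not anticipate a serious obstacle: once the third 2-RDM constraint collapses the interior coefficients, the problem degenerates into an elementary two-term phase optimization. The only points meriting care are verifying that the off-diagonal constraints in (\ref{apd:same_2rdm}) place no additional relation between $b_0$ and $b_4$, so that their relative phase is genuinely unconstrained, and noting that only the moduli of the coefficients enter the feasible set, which lets the signs of $\sin\Theta$ and $\cos\Theta$ be handled uniformly for all $\Theta$.
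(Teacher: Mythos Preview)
Your proposal is correct and follows essentially the same approach as the paper: both restrict to symmetric competitors in the form (\ref{apd:phi}), use the 2-RDM constraints (\ref{apd:same_2rdm}) to force $b_1=b_2=b_3=0$ and fix $|b_0|,|b_4|$, and then minimize the fidelity over the remaining phase freedom. The paper carries out the last step by differentiating $|\sin^2\Theta + e^{i\gamma}\cos^2\Theta|^2$ with respect to a single phase $\gamma$, whereas you invoke the antiparallel argument directly, but this is a cosmetic difference.
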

    
    \begin{proof}
    For a generalized GHZ state $\ket{\psi_\text{GHZ}}$ in the form (\ref{apd:ghz:psi}), equations (\ref{apd:same_2rdm}) imply that any state $\ket{\phi}$ in the form (\ref{apd:phi}) sharing the same 2-RDMs with $\ket{\psi_\text{GHZ}}$ must satisfy:
    \begin{align*}
        |b_0|^2 = |c_0|^2, \quad b_1 = b_2 = b_3 = 0, \quad |b_4|^2 = |c_4|^2.
    \end{align*}
    Without loss of generality, we can parameterize $\ket{\phi}$ as:
    \begin{align*}
        b_0 = c_0, \quad b_4 = c_4 e^{i\gamma}.
    \end{align*}
    The fidelity between $\ket{\psi_\text{GHZ}}$ and $\ket{\phi}$ is then:
    \begin{align*}
        f = |\langle \psi_\text{GHZ} | \phi \rangle|^2 = |\sin^2{\Theta} + e^{i\gamma} \cos^2{\Theta}|^2.
    \end{align*}
    
    To minimize $f$, we take the partial derivative with respect to $\gamma$:
    \begin{align*}
        \frac{\partial f}{\partial \gamma} = -2 \cos^2{\Theta} \sin^2{\Theta} \sin{\gamma}.
    \end{align*}
    This derivative is zero when $\gamma = n \pi$ for integer $n$. When $n$ is even, $e^{i\gamma} = 1$, and the fidelity $f = 1$ achieves its maximum value. When $n$ is odd, $e^{i\gamma} = -1$, and the fidelity reaches its minimum:
    \begin{align*}
        \min_{\gamma} f = |\sin^2{\Theta} - \cos^2{\Theta}|^2 = \cos^2{2 \Theta}.
    \end{align*}
    The corresponding state $\ket{\phi_\text{GHZ}^*}$ that minimizes $f$ is:
    \begin{align*}
        |\phi_\text{GHZ}^* \rangle = \sin{\Theta}|\omega_0\rangle - \cos{\Theta}|\omega_4\rangle.
    \end{align*}
    This completes the proof.
    \end{proof}
    
    \begin{corollary}[UDP Property of Generalized GHZ States]
    \label{apd:coro_ghz}
    A generalized GHZ state $\ket{\psi_\text{GHZ}}$ in the form (\ref{apd:ghz:psi}) is UDP by its 2-RDMs if and only if $\ket{\psi_\text{GHZ}}$ is either $\pm \ket{\omega_0}$ or $\pm \ket{\omega_4}$.
    \end{corollary}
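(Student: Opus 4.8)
The plan is to obtain this as a direct consequence of Theorem~\ref{apd:theo_ghz}, using the observation that a pure state is UDP precisely when no competing pure state with fidelity strictly less than one shares its 2-RDMs. Since the UDP optimization problem minimizes $|\langle\psi|\phi\rangle|^2$ over all pure states $\ket{\phi}$ consistent with the measured 2-RDMs, the state $\ket{\psi_\text{GHZ}}$ is UDP if and only if the global minimal fidelity of that problem equals one.

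First I would recall from Theorem~\ref{apd:theo_ghz} that this minimal fidelity is exactly $f^*_\text{GHZ} = \cos^2{2\Theta}$. Invoking the characterization above, $\ket{\psi_\text{GHZ}}$ is UDP if and only if $\cos^2{2\Theta} = 1$; conversely, whenever $\cos^2{2\Theta} < 1$ the explicit state $\ket{\phi_\text{GHZ}^*} = \sin{\Theta}\ket{\omega_0} - \cos{\Theta}\ket{\omega_4}$ furnishes a distinct pure state with identical 2-RDMs, so uniqueness fails. I would then solve $\cos^2{2\Theta} = 1$, which holds exactly when $\cos{2\Theta} = \pm 1$, i.e. $2\Theta = k\pi$ for integer $k$, equivalently $\Theta = k\pi/2$. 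Enumerating the resulting states, $\Theta = 0$ and $\Theta = \pi$ give $\ket{\psi_\text{GHZ}} = \pm\ket{\omega_4}$, while $\Theta = \pi/2$ and $\Theta = 3\pi/2$ give $\ket{\psi_\text{GHZ}} = \pm\ket{\omega_0}$, which are precisely the four states claimed.

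There is essentially no serious obstacle here, since the corollary follows immediately from the closed-form minimal fidelity already established. The only point requiring care is the equivalence between the UDP property and the vanishing of the fidelity gap (that is, minimal fidelity equal to one): one must confirm that the parameterization $b_0 = c_0$, $b_4 = c_4 e^{i\gamma}$ used in the proof of Theorem~\ref{apd:theo_ghz} genuinely exhausts all pure states sharing the 2-RDMs, so that the computed minimum is the true global minimum over all competitors. Granting this, the trigonometric characterization and the final identification of the four basis states are routine.
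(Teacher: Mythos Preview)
Your proposal is correct and follows essentially the same approach as the paper's own proof: invoke Theorem~\ref{apd:theo_ghz} to obtain the closed-form minimal fidelity $\cos^2{2\Theta}$, identify UDP with this minimum equaling one, solve $\cos^2{2\Theta}=1$ to get $\Theta=k\pi/2$, and read off the four states. Your explicit caveat about whether the parameterization in Theorem~\ref{apd:theo_ghz} exhausts all competing pure states is a fair point to flag, but the paper simply takes that as established by the theorem and proceeds exactly as you do.
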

    
    \begin{proof}
    Based on Theorem~\ref{apd:theo_ghz}, the minimal fidelity of the UDP optimization problem for $\ket{\psi_\text{GHZ}}$ is $f^*_\text{GHZ} = \cos^2{2 \Theta}$. For $f^*_\text{GHZ} = 1$, it must hold that $\cos^2{2 \Theta} = 1$, which occurs if and only if $2 \Theta = k \pi$ for integer $k$. Therefore, $\Theta = k \pi / 2$, corresponding to $\ket{\psi_\text{GHZ}} = \pm \ket{\omega_0}$ or $\pm \ket{\omega_4}$. For all other values of $\Theta$, $f^*_\text{GHZ} < 1$, and $\ket{\psi_\text{GHZ}}$ is not UDP.
    \end{proof}
    
    Therefore, a generalized GHZ state $\ket{\psi_\text{GHZ}}$ in the form (\ref{apd:ghz:psi}) is not UDP by its 2-RDMs unless $\Theta = k \pi / 2$ for integer $k$. The optimal solution to the UDP optimization problem achieves a minimal fidelity of $\cos^2{2 \Theta}$, with the corresponding optimal state given by:
    \begin{equation}
        |\phi_\text{GHZ}^* \rangle = \sin{\Theta}|\omega_0\rangle - \cos{\Theta}|\omega_4\rangle.
    \end{equation}
    
    \subsection{2-RDM Eigenvalues of Special Symmetric States}
    \label{apd:sub:eigen}
    
    We now analyze a more general class of four-qubit symmetric states of the form:
    \begin{equation}
    \begin{aligned}
        \ket{\psi} = c_0 \ket{\omega_0} + c_2 \ket{\omega_2} + c_4 \ket{\omega_4}
        \label{apd:psi_}
    \end{aligned}
    \end{equation}
    where $c_0, c_2, c_4 \in \mathbb{R}$. For such states, the 2-RDMs can be expressed as
    \begin{equation}
    		\rho_{2,\psi} =
    		\begin{pmatrix}
    			a_0^2+\frac{a_2^2}{6} & 0 & 0 & \frac{a_2(a_0+a_4)}{\sqrt{6}} \\
    			0 & a_2^2/3 & a_2^2/3 & 0 \\
    			0 & a_2^2/3 & a_2^2/3 & 0 \\
    			\frac{a_2(a_0+a_4)}{\sqrt{6}} & 0 & 0 & a_4^2+\frac{a_2^2}{6} \\
    		\end{pmatrix}
    \label{apd:psi_2rdm}
    \end{equation}

    \quad
    
    Previous work has proven that a pure symmetric state in the form (\ref{apd:psi_}) is UDP by its 2-RDMs if each eigenvalue of its 2-RDMs is non-degenerate. In other words, if a pure symmetric state in the form (\ref{apd:psi_}) is non-UDP by its 2-RDMs, there must exist at least one pair of degenerate eigenvalues in its 2-RDMs.
    
   To further investigate the uniqueness properties of such states, we examine the degeneracy of the eigenvalues of their 2-RDMs. The eigenvalues of $\rho_{2,\psi}$ can be obtained by solving the characteristic equation:
    \begin{equation}
        \det(RDM_{2,\psi} - \lambda I) = 0.
    \end{equation}
    This equation factors into two separate equations:
    \begin{equation}
    	(\frac{a_2^2}{3}-\lambda)^2-(\frac{a_2^2}{3})^2 = 0,
    	\label{apd:char1}
    \end{equation}
    \begin{equation}
    	(a_0^2+\frac{a_2^2}{6}-\lambda)(a_4^2+\frac{a_2^2}{6}-\lambda)-(\frac{a_2(a_0+a_4)}{\sqrt{6}})^2 = 0.
    	\label{apd:char2}
    \end{equation}
    Solving equation (\ref{apd:char1}), we get two eigenvalues of the 2-RDM:
    \begin{equation}
    	\lambda_1 = 0, \quad
    	\lambda_2 = \frac{2 a_2^2}{3}.
    	\label{apd:lamb1and2}
    \end{equation}
    Next, we rearrange equation (\ref{apd:char2})  into the standard quadratic form of quadratic equation:
    \begin{equation}
    	\lambda^2-(a_0^2+a_4^2+\frac{a_2^2}{3})\lambda-(a_0 a_4 - \frac{a_2^2}{6})^2 = 0,
    \end{equation}
    from which we compute the discriminant to be
    \begin{equation}
    	(a_0+a_4)^2((a_0-a_4)^2+\frac{2a_2^2}{3}) \ge 0.
    \end{equation}
    indicating that the quadratic equation has two real roots. Therefore, the remaining two eigenvalues of the 2-RDM are
    \begin{equation}
        \lambda_{3,4} = \frac{(a_0^2+a_4^2+\frac{a_2^2}{3}) \pm \sqrt{(a_0+a_4)^2((a_0-a_4)^2+\frac{2a_2^2}{3})}}{2}.
    	\label{apd:lamb3and4}
    \end{equation}
    For the 2-RDM in the form (\ref{apd:psi_2rdm}), eigenvalue degeneracies can occur in four distinct  ways:
    \begin{itemize}
        \item (i) $\lambda_1 = \lambda_2$,
        \item (ii) $\lambda_3 = \lambda_4$,
        \item (iii) $\lambda_1 = \lambda_3$ or $\lambda_1 = \lambda_4$,
        \item (iv) $\lambda_2 = \lambda_3$ or $\lambda_2 = \lambda_4$.
    \end{itemize}

    \subsection{Rank Reduction by Target State Properties}
    \label{apd:sub:rank}

    Denote $\mathbf{A}$ as the measurement framework containing all the non-identity single-qubit or two-qubit Pauli operators in four-qubit system, i.e. $|\mathbf{A}|=66$. Based on Corollary~\ref{coro}, for any target state under the measurement framework $\mathbf{A}$, the UDA problem only requires considering density matrices whose ranks are bounded by $8$.

    In this subsection, we demonstrate that for a symmetric pure state $\ket{\psi}$ of the form (\ref{apd:psi_}) as the target state, the maximum rank required to solve the UDA problem in a four-qubit system can be reduced from $8$ to $5$. This reduction is achieved by leveraging symmetry properties and constraints imposed by the measurement outcomes. The reasoning proceeds through the following three theorems.

    \begin{theorem}[Preservation of Containment Relationship under Partial Traces]
    \label{apd:theo_trace}
    
        Let $\rho$ be a density operator on a Hilbert space $\mathcal{H} = \mathcal{H}_P \otimes \mathcal{H}_Q$. If $H$ is a bounded linear operator such that 
        \begin{align*}
            H \in B(\operatorname{supp}\rho),
        \end{align*}
        then its partial trace over $\mathcal{H}_P$ satisfies
        \begin{align*}
            \Tr_P(H) \in B(\operatorname{supp} \Tr_P(\rho)).
        \end{align*}
    \end{theorem}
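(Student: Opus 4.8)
The plan is to reduce the statement to a clean intrinsic description of the support of a reduced density operator, and then to verify the required containment on ranges. Throughout, let $P_\rho$ denote the orthogonal projector onto $\operatorname{supp}\rho$, so that the hypothesis $H \in B(\operatorname{supp}\rho)$ is equivalent to $H = P_\rho H P_\rho$; in particular both $\operatorname{range}(H) \subseteq \operatorname{supp}\rho$ and $\operatorname{range}(H^\dagger) \subseteq \operatorname{supp}\rho$. Write $\sigma = \Tr_P(\rho)$ and let $Q$ be the projector onto $\operatorname{supp}\sigma$. The conclusion $\Tr_P(H) \in B(\operatorname{supp}\sigma)$ is precisely the pair of identities $Q\,\Tr_P(H) = \Tr_P(H)$ and $\Tr_P(H)\,Q = \Tr_P(H)$, i.e.\ both $\Tr_P(H)$ and its adjoint map $\mathcal{H}_Q$ into $\operatorname{supp}\sigma$.

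The key lemma I would establish first characterizes $\operatorname{supp}\sigma$ intrinsically: writing the spectral decomposition $\rho = \sum_i \lambda_i |\psi_i\rangle\langle\psi_i|$ with $\lambda_i > 0$, I claim
\begin{equation*}
\operatorname{supp}(\Tr_P\rho) = \operatorname{span}\{ (\langle\phi|_P \otimes I_Q)|\psi\rangle : |\phi\rangle \in \mathcal{H}_P,\ |\psi\rangle \in \operatorname{supp}\rho \}.
\end{equation*}
To prove this I would invoke the fact that for a finite family of positive semi-definite operators the support of the sum equals the span of the individual supports, which follows from $\ker(A+B) = \ker A \cap \ker B$ for $A,B \succeq 0$. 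Applying this to $\sigma = \sum_i \lambda_i \Tr_P(|\psi_i\rangle\langle\psi_i|)$ reduces the claim to the single-vector case, where $\operatorname{supp}(\Tr_P|\psi\rangle\langle\psi|)$ is exactly the span of the $\mathcal{H}_Q$-side Schmidt vectors of $|\psi\rangle$, namely $\{(\langle\phi|_P \otimes I_Q)|\psi\rangle : |\phi\rangle \in \mathcal{H}_P\}$; linearity in $|\psi\rangle$ then yields the stated span over all of $\operatorname{supp}\rho$.

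With the lemma in hand the main argument is short. Fixing an orthonormal basis $\{|a\rangle\}$ of $\mathcal{H}_P$, I would expand $\Tr_P(H) = \sum_a (\langle a|_P \otimes I_Q)\,H\,(|a\rangle_P \otimes I_Q)$ and apply it to an arbitrary $|w\rangle \in \mathcal{H}_Q$. Each summand equals $(\langle a|_P \otimes I_Q)|\chi_a\rangle$ with $|\chi_a\rangle := H(|a\rangle_P \otimes |w\rangle) \in \operatorname{range}(H) \subseteq \operatorname{supp}\rho$, hence lies in $\operatorname{supp}\sigma$ by the lemma; summing gives $\operatorname{range}(\Tr_P H) \subseteq \operatorname{supp}\sigma$, that is $Q\,\Tr_P(H) = \Tr_P(H)$. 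Since $H^\dagger = P_\rho H^\dagger P_\rho$ also lies in $B(\operatorname{supp}\rho)$ and $\Tr_P(H)^\dagger = \Tr_P(H^\dagger)$, the identical argument applied to $H^\dagger$ yields $\operatorname{range}(\Tr_P(H)^\dagger) \subseteq \operatorname{supp}\sigma$, which is $\Tr_P(H)\,Q = \Tr_P(H)$. Combining the two gives $\Tr_P(H) = Q\,\Tr_P(H)\,Q \in B(\operatorname{supp}\sigma)$, as required.

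The step I expect to be the main obstacle is the support characterization lemma, specifically the assertion that $\operatorname{supp}(\Tr_P\rho)$ equals (not merely contains) the span of the $\mathcal{H}_Q$-slices of $\operatorname{supp}\rho$. This rests on the positive semi-definite identity $\ker(A+B) = \ker A \cap \ker B$ and on correctly identifying the single-vector support with its Schmidt span. Once that is pinned down, the containment $\operatorname{range}(\Tr_P H) \subseteq \operatorname{supp}\sigma$ and the passage to the adjoint are routine, so the care of the proof concentrates in the lemma rather than in the final computation.
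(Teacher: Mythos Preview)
Your argument is correct but takes the dual route to the paper's. The paper works with kernels rather than ranges: it notes that $|\psi_Q\rangle \in (\operatorname{supp}\Tr_P\rho)^\perp$ forces $|\psi_P\rangle \otimes |\psi_Q\rangle \in (\operatorname{supp}\rho)^\perp$ for every $|\psi_P\rangle$ (immediate from $\rho\succeq 0$ and $\langle\psi_Q|\Tr_P\rho|\psi_Q\rangle=0$), and then the partial-trace expansion gives $\Tr_P(H)|\psi_Q\rangle=0$ in one line. This bypasses your support-characterization lemma entirely; in fact the containment direction of that lemma you actually invoke is precisely the contrapositive of the paper's kernel observation, so the two proofs rest on the same underlying fact, just packaged differently. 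Your version is more careful in one respect: you explicitly treat both the range and the co-range via $H^\dagger$, whereas the paper only verifies the kernel side (which suffices in its downstream application, where $H$ is Hermitian). Finally, your worry about the full \emph{equality} in the lemma is misplaced---only the inclusion of $\mathcal{H}_Q$-slices of $\operatorname{supp}\rho$ into $\operatorname{supp}\sigma$ is ever used, and that inclusion is the easy direction.
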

    
    \begin{proof}
        Given that $\rho$ is a density operator on $\mathcal{H} = \mathcal{H}_P \otimes \mathcal{H}_Q$, consider the subspace $\operatorname{supp} \rho$ (i.e. the support of $\rho$). Let $H$ be an operator in $B(\operatorname{supp} \rho)$. This means that
        \begin{align*}
           H \ket{\psi} = 0 
           \text{\, for all } \ket{\psi} \in (\operatorname{supp} \rho)^\perp.
        \end{align*}

        By the partial trace over $\mathcal{H}_P$, any non-zero vector in $\operatorname{supp} \rho$ must project onto a non-zero vector in $\operatorname{supp} \Tr_P(\rho)$ on subsystem $Q$. Conversely, if a state $\ket{\psi_Q} \in \mathcal{H}_Q$ satisfies
        \begin{align*}
            \ket{\psi_Q} \in (\operatorname{supp} \Tr_P(\rho))^\perp,
        \end{align*}
        then we can have
        \begin{align*}
           \ket{\psi_P} \otimes \ket{\psi_Q} \in (\operatorname{supp} \rho)^\perp 
           \text{\, for all } 
           \ket{\psi_P} \in \mathcal{H}_P.
        \end{align*}
        Since $H \in B(\operatorname{supp} \rho)$, it follows that
        \begin{align*}
            H(\ket{\psi_P} \otimes \ket{\psi_Q}) = 0
            \,\text{ if }\,
            \ket{\psi_Q} \in (\operatorname{supp} \Tr_P(\rho))^\perp.
        \end{align*}
        Recall the definition of partial trace, 
        \begin{align*}
            \Tr_P(H) = \sum\nolimits_{j} \Bigl( \bra{j} \otimes I \Big) \,H\, \Big(\ket{j} \otimes I \Big)
        \end{align*}
        where $
        \{\ket{j}\}$ is an orthonormal basis for $\mathcal{H}_P$. Then, for any $\ket{\psi_Q} \in (\operatorname{supp} \Tr_P(\rho))^\perp$, we calculate
        \begin{align*}
            \Tr_P(H) \ket{\psi_Q} = \sum\nolimits_{j} \Bigl(\bra{j} \otimes I \Big) \,H\, \Big(\ket{j} \otimes I \Big) \ket{\psi_Q}.
        \end{align*}
        Substituting $\Big(\ket{j} \otimes I \Big) \ket{\psi_Q} = \ket{j} \otimes \ket{\psi_Q}$, we find
        \begin{align*}
            \Tr_P(H) \ket{\psi_Q} =\sum\nolimits_{j} \Bigl(\bra{j} \otimes I \Big) \Big(H\, (\ket{j} \otimes \ket{\psi_Q} )\Big)
        \end{align*}
        Since $H\, (\ket{j} \otimes \ket{\psi_Q})=0$ when $\ket{\psi_Q} \in (\operatorname{supp} \Tr_P(\rho))^\perp$,
        \begin{align*}
            \Tr_P(H) \ket{\psi_Q} = 0 
            \,\text{ if }\, 
            \ket{\psi_Q} \in (\operatorname{supp} \Tr_P(\rho))^\perp.
        \end{align*}
        Therefore, we can prove that 
        \begin{align*}
           H \in B(\operatorname{supp} \rho)
           \implies
           \Tr_P(H) \in B(\operatorname{supp} \Tr_P(\rho)).
        \end{align*}

    \end{proof}

    \begin{theorem} [Prior Information from Target Symmetric States]
        \label{apd:theo_prior}
        Let $\rho$ be a four-qubit density matrix that shares identical 2-RDMs with a symmetric pure state $\ket{\psi}$ of the form (\ref{apd:psi_}). Then, for any operator $H \in B(\operatorname{supp} \rho)$, the following conditions hold:
        \begin{itemize}
            \item $\Tr(X_j H) = \Tr(X_k H)$,
            \item $\Tr(Y_j H) = \Tr(Y_k H)$,
            \item $\Tr(Z_j H) = \Tr(Z_k H)$,
            \item $\Tr(Y_j Z_k H) = \Tr(Z_j Y_k H)$
            \item $\Tr(Z_j X_k H) = \Tr(X_j Z_k H)$
            \item $\Tr(X_j Y_k H) = \Tr(Y_j X_k H)$
            \item $\Tr(H) = \Tr(X_j X_k H)+\Tr(Y_j Y_k H)+\Tr(Z_j Z_k H)$.
        \end{itemize}
        for any qubit pair $\{j,k\}$ in four-qubit system.
  
    \end{theorem}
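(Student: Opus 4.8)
The plan is to push everything down to the single two-qubit operator $M:=\Tr_{\overline{jk}}(H)$ obtained by tracing $H$ over the complementary pair $\overline{jk}$, and then to exploit the fact that the $2$-RDM $\rho_{2,\psi}$ of the symmetric target state is supported on the two-qubit symmetric (triplet) subspace. The point is that for any two-qubit observable $P$ acting on qubits $\{j,k\}$ one has $\Tr(P\,H)=\Tr\!\big(P\,\Tr_{\overline{jk}}(H)\big)=\Tr(PM)$, so each of the first six listed equalities becomes a claim $\Tr(OM)=0$ for a suitable Pauli combination $O$, while the seventh becomes a single scalar relation that I would read off from the two-qubit singlet projector.

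First I would record the key support property. Since $\rho$ shares all its $2$-RDMs with $\ket{\psi}$, we have $\Tr_{\overline{jk}}(\rho)=\rho_{2,\psi}$, and inspecting the matrix (\ref{apd:psi_2rdm}) in the basis $\{\ket{00},\ket{01},\ket{10},\ket{11}\}$ shows that its block on $\operatorname{span}\{\ket{01},\ket{10}\}$ has all four entries equal to $c_2^2/3$ (hence rank one, with range proportional to $\ket{01}+\ket{10}$), while there is no coupling to the antisymmetric direction. Thus $\rho_{2,\psi}\ket{s}=0$ for the singlet $\ket{s}=\tfrac{1}{\sqrt2}(\ket{01}-\ket{10})$, i.e. $\ket{s}\in(\operatorname{supp}\rho_{2,\psi})^\perp$ and $\operatorname{supp}\rho_{2,\psi}$ lies in the symmetric subspace. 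Because $B(\operatorname{supp}\rho)$ consists of operators acting within $\operatorname{supp}\rho$ and is therefore closed under the adjoint, both $H$ and $H^\dagger$ belong to it; applying Theorem~\ref{apd:theo_trace} to each gives $M\ket{s}=0$ and $M^\dagger\ket{s}=0$, which is equivalent to $M=P_{\mathrm{sym}}\,M\,P_{\mathrm{sym}}$ where $P_{\mathrm{sym}}=I-\ket{s}\bra{s}$ is the projector onto the triplet subspace.

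I would then dispatch the six ``difference'' identities in one stroke. Each of the operators $X_j-X_k$, $Y_j-Y_k$, $Z_j-Z_k$, $Y_jZ_k-Z_jY_k$, $Z_jX_k-X_jZ_k$, $X_jY_k-Y_jX_k$ is odd under interchanging qubits $j$ and $k$, so it anticommutes with $\mathrm{SWAP}$ and hence maps the symmetric subspace into the antisymmetric one; consequently $P_{\mathrm{sym}}\,O\,P_{\mathrm{sym}}=0$ for each such $O$. By cyclicity of the trace together with $M=P_{\mathrm{sym}}MP_{\mathrm{sym}}$ this gives $\Tr(OM)=\Tr(P_{\mathrm{sym}}OP_{\mathrm{sym}}M)=0$, which upon translating back to $H$ is precisely the first six equalities. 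For the seventh I would use the singlet projector identity $\ket{s}\bra{s}=\tfrac14\big(I-X_jX_k-Y_jY_k-Z_jZ_k\big)$: since $M\ket{s}=0$ forces $\Tr(M\ket{s}\bra{s})=0$, expanding the projector yields $\Tr(M)=\Tr(X_jX_kM)+\Tr(Y_jY_kM)+\Tr(Z_jZ_kM)$, i.e. $\Tr(H)=\Tr(X_jX_kH)+\Tr(Y_jY_kH)+\Tr(Z_jZ_kH)$.

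I expect the main obstacle to be conceptual rather than computational: recognizing that the ``prior information'' carried by the target symmetric state is exactly the orthogonality of the $2$-RDM support to the two-qubit singlet, and that the right organizing principle for all seven identities is the splitting of the two-qubit operator algebra into its $\mathrm{SWAP}$-even and $\mathrm{SWAP}$-odd parts. Once this is identified, the only technical care needed is to secure both $M\ket{s}=0$ and $\bra{s}M=0$ (equivalently $M=P_{\mathrm{sym}}MP_{\mathrm{sym}}$), which is why I would apply Theorem~\ref{apd:theo_trace} to $H^\dagger$ as well as to $H$.
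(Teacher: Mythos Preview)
Your proposal is correct and follows the same opening strategy as the paper: reduce to the two-qubit operator $M=\Tr_{\overline{jk}}(H)$ via Theorem~\ref{apd:theo_trace}, and exploit that $\rho_{2,\psi}$ annihilates the singlet $\ket{s}=(\ket{01}-\ket{10})/\sqrt{2}$, so that $M$ is supported on the triplet subspace. Where you diverge is in the extraction of the seven identities. The paper proceeds by explicit matrix-element computation: it writes out $\bra{v}\,M\,(\ket{01}-\ket{10})=0$ for $\bra{v}\in\{\bra{00},\bra{11},\bra{01}\pm\bra{10}\}$, expands each in the two-qubit Pauli basis, and then linearly combines the resulting equations to isolate the desired relations. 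Your argument is more structural: you observe that each of the six ``difference'' operators is $\mathrm{SWAP}$-odd, hence sandwiched to zero by $P_{\mathrm{sym}}$, and you read off the seventh identity directly from the singlet-projector formula $\ket{s}\bra{s}=\tfrac14(I-X_jX_k-Y_jY_k-Z_jZ_k)$. This buys you a uniform one-line proof of all six antisymmetric identities and makes the underlying reason (the $\mathrm{SWAP}$-parity decomposition of the two-qubit operator algebra) transparent, at the cost of invoking a slightly more abstract symmetry argument; the paper's computation is more pedestrian but entirely self-contained. Your explicit care in securing both $M\ket{s}=0$ and $\bra{s}M=0$ by applying Theorem~\ref{apd:theo_trace} to $H^{\dagger}$ as well is a nice touch---the paper's proof writes only $M(\ket{01}-\ket{10})=0$ and then tacitly splits complex equations into real and imaginary parts, which really uses both conditions.
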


    \begin{proof}
    
    Each 2-RDM of $\rho$ is identical to the 2-RDMs of the symmetric pure state $\ket{\psi}$, denoted as $\rho_{2, \psi}$. By Theorem~\ref{apd:theo_trace}, we know that each 2-RDM of $H \in B(\operatorname{supp} \rho)$ lies in
    $B(\operatorname{supp} \rho_{2,\psi})$. 
    
    With at least one zero eigenvalue, the 2-RDM $\rho_{2,\psi}$ in the form (\ref{apd:psi_2rdm}) has rank at most 3. The support of a generic $\rho_{2,\psi}$ is spanned by states $\ket{00}$, $\ket{11}$ and $(\ket{01}+\ket{10})/\sqrt{2}$. herefore, any operator $H \in B(\operatorname{supp} \rho)$ satisfies
    \begin{align*}
        \Tr_{\{3,4\}}(H)(\ket{01} - \ket{10}) = 0,
    \end{align*}
    where $\Tr_{\{3,4\}}(H)$ is the 2-RDM of $H$ obtained by tracing out the last two qubits.
    
    We analyze the implications of this orthogonality constraint by considering the following cases:
    \begin{itemize}
    \item $\bra{00}\Tr_{\{3,4\}}(H)(\ket{01} - \ket{10})=0$ implies that
    \begin{align*}
        \Tr(X_1 H) + \Tr(X_1 Z_2H) 
        - \Tr(X_2 H) - \Tr(Z_1 X_2 H) = 0,
        \\
        \Tr(Y_1 H) + \Tr(Y_1 Z_2 H) -  \Tr(Y_2 H) - \Tr(Z_1 Y_2 H) = 0.
    \end{align*}
    \item $ \bra{11}\Tr_{\{3,4\}}(H)(\ket{01} - \ket{10})=0$ implies that
    \begin{align*}
        \Tr(X_2 H) + \Tr(Z_1 X_2 H) 
        - \Tr(X_1 H) - \Tr(X_1 Z_2 H) = 0,
        \\
        \Tr(Y_2 H) + \Tr(Z_1 Y_2 H) - \Tr(Y_1 H) - \Tr(Y_1 Z_2 H) = 0.
    \end{align*}
    \item $ (\bra{01}+\bra{10})\Tr_{\{3,4\}}(H) (\ket{01} - \ket{10}) = 0$ implies that
    \begin{align*}
        \Tr(H) - \Tr(Z_1 Z_2 H) 
        - &\Tr(X_1 X_2 H) - \Tr(Y_1 Y_2 H) = 0,
        \\
        \Tr(X_1 Y_2 H) & - \Tr (Y_1 X_2 H) = 0.
    \end{align*}
    \item $ (\bra{01}-\bra{10})\Tr_{\{3,4\}}(H) (\ket{01} - \ket{10}) = 0$ implies that
    \begin{align*}
        \Tr(Z_1 H) - \Tr(Z_2 H) = 0.
    \end{align*}
    \end{itemize}
    By rearranging the equations above, we deduce the following constraints:
    \begin{itemize}
            \item $\Tr(X_1 H) = \Tr(X_2 H)$,
            \item $\Tr(Y_1 H) = \Tr(Y_2 H)$,
            \item $\Tr(Z_1 H) = \Tr(Z_2 H)$,
            \item $\Tr(Y_1 Z_2 H) = \Tr(Z_1 Y_2 H)$
            \item $\Tr(Z_1 X_2 H) = \Tr(X_1 Z_2 H)$
            \item $\Tr(X_1 Y_2 H) = \Tr(Y_1 X_2 H)$
            \item $\Tr(H) = \Tr(X_1 X_2 H)+\Tr(Y_1 Y_2 H)+\Tr(Z_1 Z_2 H)$.
        \end{itemize}
    
    Similarly, we can show that these constraints hold for any other qubit pairs in the four-qubit system.
    \end{proof}

    \begin{theorem}[Rank Reduction by Symmetric State Property]
    \label{apd:theo_rank}
    Let $\mathbf{A}$ be a measurement framework containing all single-qubit and two-qubit Pauli operators in a four-qubit system. Given a pure symmetric state $\ket{\psi}$ in the form (\ref{apd:psi_}), if density matrix $\rho$ shares identical 2-RDMs with symmetric state $\ket{\psi}$, then there exists a density matrix $\sigma$ with rank at most $5$ such that
    \begin{align*}
        \Tr(\ket{\psi}\bra{\psi}\sigma) = \Tr(\ket{\psi}\bra{\psi}\sigma) 
        \text{ and }
        \Vec{\mathcal{M}}_{\mathbf{A}}(\sigma) = \Vec{\mathcal{M}}_{\mathbf{A}}(\rho).
    \end{align*}
    \end{theorem}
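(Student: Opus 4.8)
The plan is to mirror the rank-reduction argument of Theorem~\ref{theo} and Corollary~\ref{coro}, but to invoke Theorem~\ref{apd:theo_prior} to sharply cut the number of \emph{independent} linear constraints that a rank-lowering perturbation must respect. First I would fix $\rho$ sharing all its 2-RDMs with $\ket{\psi}$ (every 2-RDM equals $\rho_{2,\psi}$ because $\ket{\psi}$ is symmetric), set $r = \operatorname{rank}(\rho)$, and work in the real vector space of Hermitian operators $B(\operatorname{supp}\rho)$, which has dimension $r^2$. Inside it I define the perturbation space
\begin{equation*}
M = \{ H = H^\dagger \in B(\operatorname{supp}\rho): \Vec{\mathcal{M}}_{\mathbf{A}'}(H) = \Vec{0} \},
\end{equation*}
where $\mathbf{A}' = \mathbf{A} \cup \{ \ket{\psi}\bra{\psi}, I \}$ adjoins the fidelity observable and the identity, so that every $H \in M$ leaves all 2-RDM measurements, the fidelity with $\ket{\psi}$, and the trace invariant.

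The crux is to bound $\dim M$ from below. Naively $\mathbf{A}'$ imposes $66 + 1 + 1 = 68$ constraints, which only reproduces the Corollary~\ref{coro} bound of $8$. However, Theorem~\ref{apd:theo_prior} holds for \emph{every} $H \in B(\operatorname{supp}\rho)$ and collapses these functionals: the twelve single-qubit constraints reduce to three (all $\Tr(X_j H)$ coincide, and likewise for $Y$ and $Z$), the off-diagonal two-qubit Paulis pair up ($X_jY_k \leftrightarrow Y_jX_k$, and cyclically for the other two), and for each of the six pairs the relation $\Tr(X_jX_k H)+\Tr(Y_jY_k H)+\Tr(Z_jZ_k H)=\Tr(H)$ ties the diagonal two-qubit Paulis to the trace. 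Counting the independent relations furnished this way ($9$ single-qubit, $18$ off-diagonal, $6$ diagonal-to-identity, hence $33$ in total) shows the $68$ functionals span a space of dimension at most $35$, so $\dim M \ge r^2 - 35$.

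With this count the reduction proceeds exactly as in Theorem~\ref{theo}: whenever $r^2 - 35 \ge 1$, that is $r \ge 6$, there is a nonzero traceless $H \in M$, and choosing the intermediate $\epsilon$ at which $\rho + \epsilon H$ first becomes singular yields a density matrix of strictly smaller rank that still matches all 2-RDMs and the fidelity with $\ket{\psi}$. Iterating drives the rank down until $r'^2 \le 35 < 36$, i.e. $r' \le 5$. A point I would check carefully is that the hypothesis of Theorem~\ref{apd:theo_prior} survives each iteration: since $\mathbf{A}$ contains \emph{all} one- and two-body Paulis, preserving $\Vec{\mathcal{M}}_{\mathbf{A}}$ preserves every 2-RDM, so each intermediate matrix continues to share $\rho_{2,\psi}$ and the constraint collapse remains valid at every step (even for non-symmetric intermediates).

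The main obstacle is the bookkeeping of the second paragraph: to guarantee the span has dimension \emph{at most} $35$, one must confirm that the $33$ relations supplied by Theorem~\ref{apd:theo_prior} are genuinely independent. I would argue this by noting that the single-qubit, off-diagonal, and diagonal-to-identity families act on disjoint sets of Pauli operators apart from the shared identity, and that within the diagonal-to-identity family the combination $\sum_p a_p(X_pX_{p'}+Y_pY_{p'}+Z_pZ_{p'}) - (\sum_p a_p) I = 0$ forces each $a_p = 0$ since the operators involved are linearly independent. Any additional dependence (for instance a smaller-rank $\rho_{2,\psi}$ with enlarged orthogonal complement) only adds relations and lowers the bound further, so the rank-$5$ guarantee is robust; establishing the $\le 35$ count is precisely what upgrades the generic $\sqrt{m+3}\approx 8$ bound to the sharper result.
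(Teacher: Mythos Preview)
Your proposal is correct and follows essentially the same strategy as the paper: both invoke Theorem~\ref{apd:theo_prior} to cut the effective number of constraints on $B(\operatorname{supp}\rho)$ from $68$ down to $35$, then rerun the rank-reduction argument of Theorem~\ref{theo} to force $r' \le 5$. The only cosmetic difference is that the paper builds an explicit reduced framework $\mathbf{C}$ of $35$ operators (so $\dim M \ge r^2 - 35$ follows directly from rank--nullity without checking independence of anything), whereas you reach the same bound dually by exhibiting $33$ relations among the $68$ functionals; your extra care in verifying these relations are independent and in noting that Theorem~\ref{apd:theo_prior} continues to apply at each iteration (since $\Vec{\mathcal{M}}_{\mathbf{A}}$ is preserved) fills in points the paper leaves implicit.
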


    \begin{proof}
    Suppose $\rho$ is density matrix sharing identical 2-RDMs with symmetric state $\ket{\psi}$ in the form (\ref{apd:psi_}). Recall the proof of Theorem~\ref{theo}. To obtain a lower-rank density matrix equivalent to $\rho$, we define an extended measurement framework $\mathbf{B} = \mathbf{A} \cup \{ I, \ket{\psi}\bra{\psi}\}$ where $\mathbf{A}$ contains all non-identity single-qubit or two-qubit Pauli operators. Define the subspace of Hermitian operators:
    \begin{align*}
        M = \{ H \in B(\operatorname{supp}\rho): 
        \Vec{\mathcal{M}}_{\mathbf{B}}(H) = \Vec{0}
        \text{ and }
        H=H^{\dagger}
        \}.
    \end{align*}
    The dimension of this set is at least $r^2-|\mathbf{B}|$, where $r = \operatorname{rank}(\rho)$ and $|\mathbf{B}|$ is the number of observables in $\mathbf{B}$.

    In fact, the new measurement framework does not need to include all operators from $\mathbf{A}$. According to Theorem~\ref{apd:theo_prior},  we leverage the symmetry properties of any operator $H \in B(\operatorname{supp} \rho)$ to identify and eliminate redundant operators: 
    \begin{itemize}
        \item \textbf{Single-qubit Pauli operators:}  Since $\Tr(X_j H) = \Tr(X_k H)$, $\Tr(Y_j H) = \Tr(Y_k H)$ and $\Tr(Z_j H) = \Tr(Z_k H)$ for any qubit pair $\{j,k\}$, it suffices to include only three single-qubit operators $X_1,Y_1, Z_1$.
        \item \textbf{Two-qubit Pauli operators (different Pauli matrices on the pair):} For each qubit pair $\{j,k\}$, since $\Tr(Y_1 Z_2 H) = \Tr(Z_1 Y_2 H)$, $\Tr(Z_1 X_2 H) = \Tr(X_1 Z_2 H)$ and $\Tr(X_1 Y_2 H) = \Tr(Y_1 X_2 H)$, if $Y_1 Z_2, Z_1 X_2, X_1 Y_2$ are already included, including $Z_1 Y_2,X_1 Z_2,Y_1 X_2$ is non-necessary.
        \item \textbf{Two-qubit Pauli operators (same Pauli matrix on the pair):} Suppose identity operator $I$ is already included. For each qubit pair $\{j,k\}$, since $\Tr(H) = \Tr(X_j X_k H)+\Tr(Y_j Y_k H)+\Tr(Z_j Z_k H)$, if $X_j X_k$, $Y_j Y_k$ are already included, including $Z_j Z_k$ is non-necessary.
    \end{itemize}

   There are 6 distinct qubit pairs in a four-qubit system, regardless of the order. For each qubit pair $\{j,k\}$, it suffices to including two-qubit Pauli operators $Y_j Z_k,Z_j X_k,X_j Y_k,X_j X_k,Y_j Y_k$. Altogether, this gives $30$ two-qubit Pauli operators. Adding the identity operator $I$, $\ket{\psi} \bra{\psi}$, and the single-qubit operators $X_1$, $Y_1$, $Z_1$, the total number of observables in the reduced framework $\mathbf{C}$ is $35$. Using the reduced measurement framework $\mathbf{C}$, we redefine the subspace $M$ as
    \begin{align*}
        M = \{ H \in B(\operatorname{supp}\rho): 
        \Vec{\mathcal{M}}_{\mathbf{C}}(H) = \Vec{0}
        \text{ and }
        H=H^{\dagger}
        \}
    \end{align*}
    The dimension of $M$ is at least $r^2-|\mathbf{C}|$.

    Similar to the proof of Theorem~\ref{theo}, by iteratively reducing the rank, we may eventually obtain a equivalent density matrix $\sigma$ with $\operatorname{rank}(\sigma) < \sqrt{|\mathbf{C}|+1}$. Since $|\mathbf{C}|=35$, there exists a density matrix $\sigma$ with rank at most 5 such that
    \begin{align*}
        \Tr(\ket{\psi}\bra{\psi}\sigma) = \Tr(\ket{\psi}\bra{\psi}\rho) 
        \text{ and }
        \Vec{\mathcal{M}}_{\mathbf{A}}(\sigma) = \Vec{\mathcal{M}}_{\mathbf{A}}(\rho).
    \end{align*}
    

    \end{proof}

\end{document}